\def\appendix#1{
\addtocounter{section}{1} \setcounter{equation}{0}
\renewcommand{\thesection}{\Alph{section}}
\section*{Appendix \thesection\protect\indent\quad
#1}
}
\renewcommand{\theequation}{\thesection.\arabic{equation}}
\def\marginnote#1{}
\newtoks\amorpm
\edef\standardtime{{\ifnum\hour<12 \global\amorpm={am}%
        \else\global\amorpm={pm}\advance\hour by-12 \fi
        \ifnum\hour=0 \hour=12 \fi
        \number\hour:\ifnum\minute<10 0\fi\number\minute\the\amorpm}}
\edef\militarytime{\number\hour:\ifnum\minute<100\fi\number\minute}
\newcommand{\tcr}{\textcolor{red}}
\newcommand{\tcg}{\textcolor{green}}
\let\wtd=\widetilde
\def\draftlabel#1{{\@bsphack\if@filesw {\let\thepage\relax
      \xdef\@gtempa{\write\@auxout{\string
          \newlabel{#1}{{\@currentlabel}{\thepage}}}}}\@gtempa \if@nobreak
    \ifvmode\nobreak\fi\fi\fi\@esphack} \gdef\@eqnlabel{#1}}
    \def\@eqnlabel{}
\def\@vacuum{}
\def\draftmarginnote#1{\marginpar{\raggedright\scriptsize\tt#1}}
\def\draft{
%
%
  \oddsidemargin -.5truein
  \def\@oddfoot{\footnotesize \sl preliminary draft \hfil
    \rm\thepage\hfil\sl\today\quad\militarytime}
  \let\@evenfoot\@oddfoot \overfullrule 3pt
    \let\label=\draftlabel
    \let\marginnote=\draftmarginnote
  \def\@eqnnum{(\theequation)\rlap{\kern\marginparsep\tt\@eqnlabel}%
    \global\let\@eqnlabel\@vacuum}

  }
\newcommand{\tr}{\,{\rm Tr}\,}
\def\be{\begin{equation}}
\def\ee{\end{equation}}
\def\bea{\begin{eqnarray}}
\def\eea{\end{eqnarray}}
\def\<{\langle}
\def\>{\rangle}
\def\nn{\nonumber}
\def\HH{{\mathbb H}}
\def\RR{{\mathbb R}}
\def\PP{{\mathbb P}}
\def\ZZ{{\mathbb Z}}
\def\F{{\mathcal F}}
\let\wtd=\widetilde
\def\one#1{#1^{\raise5pt\hbox{$\scriptstyle\!\!\!\!1$}}\,{}}
\def\two#1{#1^{\raise5pt\hbox{$\scriptstyle\!\!\!\!2$}}\,{}}
\def\onetwo#1{#1^{\raise5pt\hbox{$\scriptstyle\!\!\!\!\!{12}$}}\,{}}
\def\e{e}
\def\x{{\mathbf x}}
\def\pp{{\mathbf p}}
\def\A{{\mathcal A}}
\def\Q{{\mathbb Q}}
\newtheorem{theorem}{Theorem}[section]
\newtheorem{lm}[theorem]{Lemma}
\newtheorem{prop}[theorem]{Proposition}
\newtheorem{corollary}[theorem]{Corollary}
\theoremstyle{definition}
\newtheorem{df}[theorem]{Definition}
\newtheorem{remark}[theorem]{Remark}
\theoremstyle{remark}
\newtheorem{conjecture}[theorem]{Conjecture}
\begin{document}
\title[Teichm\"uller spaces of Riemann surfaces with orbifold points]
{Teichm\"uller spaces of Riemann surfaces with orbifold points of arbitrary order and
cluster variables}
\author{Leonid Chekhov$^{\ast,\dag}$}\thanks{$^{\ast}$Steklov Mathematical Institute and  Laboratoire Poncelet,
Moscow, Russia}\thanks{$^{\dag}$School of Mathematics, Loughborough University, UK.}
\author{Michael Shapiro$^\diamondsuit$}\thanks{$^\diamondsuit$Mathematical Department, Michigan State University, East Lansing, USA}

\maketitle

\begin{abstract}
We define  a new generalized class of cluster type mutations
for which exchange transformations are given by reciprocal polynomials.
In the case of second-order polynomials of the form $x+2\cos{\pi/n_o}+x^{-1}$ these transformations are related to
triangulations of Riemann surfaces of arbitrary
genus with at least one hole/puncture and with an arbitrary number of
orbifold points of arbitrary integer orders $n_o$. In the second part of the paper,
we propose the dual graph description of the corresponding Teichm\"uller spaces, construct the Poisson
algebra of the Teichm\"uller space coordinates, propose the
combinatorial description of the corresponding geodesic functions and
find the mapping class group transformations thus providing the complete description of the above
Teichm\"uller spaces.
\end{abstract}

\section{Introduction}

Since their appearance, cluster variables~\cite{FZ} find applications in geometry. An important example
of the cluster variables  is provided by
$\lambda$-lengths~\cite{ThSh},~\cite{Penn1} of curves that partitions Riemann surfaces with punctures into ideal triangles. In this case, exchange polynomials are quadratic. These coordinates were
generalized in~\cite{Fock1},~\cite{Fock2} to the case of Riemann surfaces with holes. At the same time,
a combinatorial description of geodesic functions in terms of the dual variables, the shear coordinates,
as well as their quantization, was developed in~\cite{ChF}. Amazingly enough, transition from punctures to holes
does not effectively change the corresponding cluster algebra. Generalizations of Teichm\"uller spaces of Riemann
surfaces to the case of bordered Riemann surfaces~\cite{KaufPen} or ciliated Riemann surfaces~\cite{FG} were constructed.
The corresponding cluster algebras were developed in~\cite{GSV1,FG1,FST}, whereas the geometrical pattern underlying the
bordered Riemann surfaces was identified with that of Riemann surfaces with ${\mathbb Z}_2$-orbifold points in~\cite{Ch1},~\cite{Ch1a},
where the corresponding mutations (flips) in terms of the shear coordinates were constructed. These flips preserve the sets
of geodesic functions; the corresponding transformations for cluster variables were considered in~\cite{FST-Orb} and
the corresponding mutations were
again given by the standard two-term relations. In \cite{ChM}, the description of Teichm\"uller spaces of
Riemann surfaces with holes and with orbifold points of order two and three was given.

In the present paper we provide the combinatorial description of Riemann surfaces with holes and with orbifold points of arbitrary orders. We show that mutations
for orbifold points of order greater than two are given by three-term transformations (unlike the  two-term transformations for order two) determined by a
second-order reciprocal polynomial.  We prove the Laurent phenomenon and positivity property for these transformations. The positive coefficients
of Laurent polynomials however
are not necessarily integral in the presence of orbifold points of order greater than three.
On the shear-coordinate
side, we define the complete set of real-valued coordinates,
construct all the geodesic functions for such surfaces, all the mapping-class-group transformations, and prove the regularity
condition, that is, that all elements of the corresponding Fuchsian group are hyperbolic or parabolic ones except elements
conjugate to loops around orbifold points. We therefore have a regular (up to exactly the indicated orbifold points) Riemann surface
with holes for any choice of the introduced real coordinates, and vice versa; these coordinates parametrize therefore the corresponding
Teichm\"uller spaces of Riemann surfaces with holes and with orbifold points of arbitrary orders.

As in the original formulation of cluster algebras, the insight into 
orbifold triangulations helps us to formulate a more general construction.
In \cite{GSV}, particular generalizations of cluster transformations were described that preserve Poisson bracket and have additionally some universal properties.
Until recently no applications of these transformations were known. In this paper we compute that mutations of orbifold triangulations are examples of generalization~\cite{GSV}.
Another example of  generalized quadratic cluster mutations appear under the name quasi-cluster algebra associated with non-orientable surfaces
in preprint~\cite{DP}.
Motivated by that we propose a new algebraic construction of
generalized cluster algebras with mutations given by reciprocal polynomials of arbitrary order.
Using the tools of the standard cluster algebra \cite{FZ1}, \cite{FZ2}, we prove that the Laurent phenomenon holds true in this case
as well. For algebras of order greater than two, we do not know whether the positivity property holds in general; it however holds in all
tested examples, so we formulate it as a conjecture. We also prove that
generalized cluster algebras of finite type satisfy the same Cartan--Killing classification as the standard cluster algebras. Suggested construction is a particular case of more general construction of \cite{LP}. However, we note that generally speaking mutations in \cite{LP} preserve neither presymplectic 2-form nor the Poisson bracket.

\vskip 2mm \noindent{\bf Acknowledgements.}  The authors are
grateful to Anna Felikson, Pavel Tumarkin, Sergey Fomin, and Dylan Thurston for many enlighting conversations and, specially,  to Alek Vainshtein for
valuable comments improving our paper.

The work of L.Ch. was supported in part 
by the Russian Foundation for Basic Research (Grant Nos. 11-01-00440-a and 11-01-12037-ofi-m-2011),
by the Grant of Supporting Leading
Scientific Schools of the Russian Federation NSh-4612.2012.1, and by the Program Mathematical Methods for Nonlinear Dynamics.

Michael Shapiro was supported in part by grants DMS-0800671 and DMS-1101369.

\section{Generalized cluster algebra}\label{s-algebra}

\noindent We briefly remind the definition of cluster algebra.

An integer $n\times n$ matrix $B$ is called \emph{skew-symmetrizable} if there exists an
integer diagonal $n\times n$ matrix $D=diag(d_1,\dots,d_n)$,
such that the product $BD$ is a skew-symmetric matrix, i.e., $b_{ij}d_j=-b_{ji}d_i$.

{
\color{black} Let $\PP$ be \emph{a semi-field } equipped with commutative multiplication $\cdot$ and addition $\oplus$. We assume that the multiplicative group of $\PP$ is a free abelian group. $\PP$ is \emph{a coefficient group} of cluster algebra. $\ZZ\PP$  is the integer group ring, $\F$ is a field of rational functions in $n$ independent
variables with coefficients in the field of fractions of $\ZZ\PP$.
$\F$ is called an ambient field.

\begin{df}
\emph{A seed} is a triple $(\x,\pp,B)$, where
\begin{itemize}
\item $\pp=(p_{x}^\pm)_{x\in\x}$, a $2n$-tuple of elements of $\PP$ is a \emph{coefficient tuple} of cluster $\x$;
\item $\x=\{x_1,\dots,x_n\}$ is a collection of algebraically independent rational functions of $n$ variables which generates $\F$ over the field of fractions of $\ZZ\PP$;
\item $B$ is a skew-symmetrizable \emph{exchange matrix}.
\end{itemize}
The part $\x$ of seed $(\x,\pp,B)$ is called \emph{cluster}, elements $x_i\in\x$ are called \emph{cluster variables},
and $B$ is called \emph{exchange matrix}.

\end{df}

\begin{df}[seed mutation]
For any $k$, $1\le k\le n$ we define \emph{the mutation} of seed $(\x,\pp,B)$ in direction $k$
as a new seed $(\x',\pp',B')$ in the following way:
\begin{equation}\label{eq:MatrixMutation}
b'_{ij}=\left\{
           \begin{array}{ll}
             -b_{ij}, & \hbox{ if } i=k \hbox{ or } j=k; \\
             b_{ij}+\frac{|b_{ik}|b_{kj}+b_{ik}|b_{kj}|}{2}, & \hbox{ otherwise.}
           \end{array}
         \right.
\end{equation}

\begin{equation}\label{eq:ClusterMutation}
x'_i=\left\{
           \begin{array}{ll}
             x_i, & \hbox{ if } i\ne k; \\
             \frac{p^+_k\prod_{b_{kj}>0}x_j^{b_{kj}}+
             p^-_k\prod_{b_{kj}<0} x_j^{-b_{kj}}}{x_k}, & \hbox{ otherwise.}
           \end{array}
         \right.
\end{equation}

\begin{eqnarray}\label{eq:CoeffMutation}
 \nonumber 
  p'^\pm_k &=& p^\mp_k \\
  \hbox{ for } i\ne k\qquad p'^+_i/p'^-_i &=& \left\{
           \begin{array}{ll}
             (p^+_k)^{b_{ik}}p^+_i/p^-_i, & \hbox{ if } b_{ik}\ge 0; \\
             (p^-_k)^{b_{ik}}p^+_i/p^-_i, & \hbox{ if } b_{ik}\le 0; \\
           \end{array}
         \right.
\end{eqnarray}

\end{df}

\noindent
We write $(\x',\pp',B')=\mu_k\left((\x,\pp,B)\right)$.
Notice that $\mu_k(\mu_k((\x,\pp,B)))=(\x,\pp,B)$.
We say that two seeds are \emph{mutation-equivalent}
if one is obtained from the other by a sequence of seed mutations.
Similarly we say that two clusters or two exchange matrices are \emph{mutation-equivalent}.

For any skew-symmetrizable matrix $B$ we define \emph{initial seed} $$(\x,\pp,\!B)=(\!\{x_1,\dots,x_n\}\!,\!\{p_1^\pm,\ldots,p_n^\pm\}\!,\!B),$$ where $B$ is the \emph{initial exchange matrix}, $\x=\{x_1,\dots,x_n\}$ is the \emph{initial cluster}, $\pp=\{p_1^\pm,\ldots,p_n^\pm\}$ is the \emph{initial coefficient tuple}.
}

{\it Cluster algebra} $\A(B)$ associated with the skew-sym\-met\-ri\-zab\-le $n\times n$ matrix $B$ is a subalgebra of $\Q(x_1,\dots,x_n)$ generated by all cluster variables of the clusters mutation-equivalent
to the initial seed $(x,B)$.

Cluster algebra $\A(B)$ is called \emph{of finite type} if it contains only finitely many
cluster variables. In other words, all clusters mutation-equivalent to initial cluster contain
totally only finitely many distinct cluster variables.



Two most important properties of cluster algebra are Laurent phenomenon~\cite{FZ1} and finite type classification~\cite{FZ2}.
More exactly, Laurent phenomenon states that any cluster variable is expressed as a Laurent polynomial in terms of the initial cluster.
The remarkable finite type classification claims that  cluster algebras of finite type are in one-to-one correspondence
with the Dynkin diagrams of finite type.

\subsection{Generalized cluster transformations}

Now we introduce more general  cluster transformations.


Assume that $B$ is a skew-symmetrizable integer matrix such that all elements in its $k$th row are divisible by $d_k$.
Define $\beta_{kj}=b_{kj}/d_k$.

\begin{lm}
Let $B'=\mu_l(B)$ be obtained from $B$ by mutation in direction $l$. Then, all entries $b'_{kj}$ of $k$-th row of $B'$ are divisible by $d_k$.
\end{lm}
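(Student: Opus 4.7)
The plan is to handle this by a short case analysis on the relationship between the row index $k$ and the mutation direction $l$, using the explicit mutation formula~(\ref{eq:MatrixMutation}).

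If $l = k$, then by~(\ref{eq:MatrixMutation}) every entry of the $k$th row is merely negated, so divisibility by $d_k$ is preserved trivially. If $l \ne k$ but $j = l$, then $b'_{kl} = -b_{kl}$, again divisible by $d_k$ by hypothesis. The only substantive case is $l \ne k$ and $j \ne l$, where
\[
b'_{kj} = b_{kj} + \frac{|b_{kl}|\,b_{lj} + b_{kl}\,|b_{lj}|}{2}.
\]
The first term is divisible by $d_k$ by the hypothesis on the $k$th row, so the task reduces to showing that the correction term is divisible by $d_k$.

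The key observation, which I expect to be the only non-routine step, is that the correction term is always an integer multiple of $b_{kl}$ itself. Writing $a = b_{kl}$ and $b = b_{lj}$ and inspecting the four sign combinations of $(a,b)$, one finds that
\[
\frac{|a|\,b + a\,|b|}{2} \;=\; \begin{cases} ab, & ab \ge 0,\\ 0, & ab \le 0,\end{cases}
\]
i.e.\ the expression equals $[a]_+[b]_+ - [-a]_+[-b]_+$ where $[x]_+ = \max(x,0)$. In particular, it is always divisible by $b_{kl}$, which in turn is divisible by $d_k$ by hypothesis. Hence $b'_{kj}$ is a sum of two multiples of $d_k$ and the lemma follows.

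Note that no use is made of skew-symmetrizability in the argument beyond the fact that it guarantees the mutation rule produces integer entries; the claim is purely a divisibility statement about the row containing the distinguished index $k$. The structure of this argument parallels and slightly refines the standard proof that the matrix mutation preserves integrality, the new input being only the observation that the correction term always carries an explicit factor of $b_{kl}$.
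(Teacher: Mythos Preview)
Your argument is correct and is precisely the detailed unpacking of what the paper states in one line (``follows immediately from matrix mutation~(\ref{eq:MatrixMutation})''): the correction term $\tfrac{|b_{kl}|b_{lj}+b_{kl}|b_{lj}|}{2}$ is always an integer multiple of $b_{kl}$, hence of $d_k$. One small slip: your piecewise display is off in the case $a,b<0$, where the expression equals $-ab$, not $ab$; your alternative form $[a]_+[b]_+-[-a]_+[-b]_+$ is the correct one, and either way divisibility by $b_{kl}$ is clear.
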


\begin{proof} The statement follows immediately from matrix mutation~\ref{eq:MatrixMutation}.
\end{proof}

We now fix $d_k$ for all $k$ from $1$ to $n$ and assume that all elements $b_{kj}$ of $k$th row of integer skew-symmetrizable matrix $B$ are divisible by $d_k$.

For a collection $\pp_i=(p_{i;0},\ldots,i_{i;d_i})$ we define the {\em exchange polynomial}
 $\theta_i[\pp_i](u,v)=\sum_{\ell=0}^{d_i} p_{i;\ell} u^\ell v^{d_i-\ell}$ be a polynomial of degree $d_i$. The corresponding inhomogeneous polynomial we denote by $\rho_i[\pp_i]=\rho_i[\pp_i](t)=\sum_{\ell=0}^{d_i} p_{i;\ell} t^\ell$. Note that,
$\theta_i[\pp_i](u,v)=u^{d_i}\rho_i[\pp_i](v/u)$.

We define a generalized seed $q$ of a generalized cluster algebra as a triple\newline  $q=(\x(q),\overline{\pp(q)},B(q))$, where
$\x=(x_1(q),\ldots,x_n(q))$ is a $n$-tuple of cluster variables in seed $q$, $\overline{\pp(q)}=(\pp_1(q),\ldots,\pp_n(q))$ is $n$-tuple of coefficient collections $\pp_i(q)$,
 $\pp_i(q)=(p_{i;0}(q),\ldots, p_{i;d_i}(q))$ is a $d_i+1$-tuple
of coefficients of $\theta_i[q]$, and, finally, $B(q)$ is an exchange $n\times n$ matrix.

Generalized cluster mutations are described by the following formulas:

Exchange matrix is mutated in direction $k$ by the rule ~\ref{eq:MatrixMutation}, which
therefore depends only on the degree of the exchange polynomial and not on its coefficients. We
introduce $u_{j;>0}=\prod_{\beta_{j,\ell}>0} x_\ell^{\beta_\ell}$, $u_{j;<0}=\prod_{\beta_{j,\ell}<0} x_\ell^{-\beta_\ell}$,
Mutation of cluster variables is given by the rule $(\{x_i'\},\{\pp_i'\},B')=\mu_k(\{x_i\},\{\pp_i\},B)$:

\begin{equation}\label{eq:GenClusterMutation}
x'_i=\left\{
           \begin{array}{ll}
             x_i, & \hbox{ if } i\ne k; \\
             \frac{\theta_k(u_{k;>0},u_{k;<0})}{x_k}, & \hbox{ otherwise.}
           \end{array}
         \right.
\end{equation}

Coefficients mutate by the following generalized rule:
\begin{eqnarray}\label{eq:GenCoeffMutation}
 \nonumber 
  p'_{k;\ell} &=& p_{k;d_k-\ell} \\
  \hbox{ for } i\ne k\qquad p'_{i;j}/p'_{i;0} &=& \left\{
           \begin{array}{ll}
             (p_{k;d_k})^{j\beta_{ik}}p_{i;j}/p_{i;0}, & \hbox{ if } b_{ik}\ge 0; \\
             (p_{k;0})^{j\beta_{ik}}p_{i;j}/p_{i;0}, & \hbox{ if } b_{ik}\le 0; \\
           \end{array}
         \right.
\end{eqnarray}

\begin{remark}
Note that if we assume that coefficients of all $\theta_i$ do not change under mutation then the corresponding inhomogeneous polynomial $\rho_i$ is reciprocal of degree $d_i$, i.e., $t^{d_i}\rho(1/t)=\rho(t)$.
\end{remark}

\begin{theorem}\label{thm:generalizedLaurent} (Laurent property) Any generalized cluster variable is a Laurent polynomial in initial cluster variables $x_i$.
\end{theorem}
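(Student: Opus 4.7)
The plan is to adapt the Caterpillar Lemma of Fomin--Zelevinsky (from their paper \emph{The Laurent phenomenon}) to the present generalized setting. Recall that the strategy there is to prove, by induction on the length of a sequence of mutations, that the result of any sequence of mutations applied to the initial seed yields an element of $\ZZ\PP[x_1^{\pm 1},\dots,x_n^{\pm 1}]$. The base step is immediate from \eqref{eq:GenClusterMutation}: a single mutation $\mu_k$ produces $x_k' = \theta_k(u_{k;>0},u_{k;<0})/x_k$, which is already a Laurent polynomial in the initial variables. The whole content of the theorem lies in controlling what happens when one mutates further.

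The inductive reduction to a small number of cases proceeds exactly as in the classical proof. One argues that it suffices to show that after any path of the form $\mu_k \mu_i \mu_k$ (with possibly $i=k$, where involutivity of $\mu_k$ gives the trivial case), applied to the initial seed, the resulting cluster variables lie in $\ZZ\PP[x_1^{\pm 1},\dots,x_n^{\pm 1}]$. This in turn is equivalent to showing that $x_k'$ and $x_i''$ (the variables obtained after one and two mutations respectively), viewed as Laurent polynomials in $\x$, have no common factor other than units. The matrix mutation rule \eqref{eq:MatrixMutation} is identical to the classical one, so the combinatorics of the exchange graph carries over; what changes is only the specific shape of the exchange polynomial.

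The heart of the argument is therefore the following coprimality check. After the mutation $\mu_k$ one obtains $x_k' = \theta_k(u_{k;>0}, u_{k;<0})/x_k$, and after the subsequent mutation $\mu_i$ the new variable $x_i''$ is (by \eqref{eq:GenClusterMutation} applied in the mutated seed) a rational expression in $\x \cup \{x_k'\}$ with denominator $x_i$ and numerator a value of $\theta_i[\pp_i']$ on monomials in $x_j$'s and $x_k'$. Substituting the Laurent expression for $x_k'$ back in, one has to show that $x_k$ does not divide the resulting numerator. Since $\theta_k$ is homogeneous of degree $d_k$, the key algebraic fact is that specializing $x_k \to 0$ in this numerator leaves a nonzero element of $\ZZ\PP[x_1^{\pm 1},\dots,\widehat{x_k}^{\pm 1},\dots,x_n^{\pm 1}]$; the coefficient mutation rule \eqref{eq:GenCoeffMutation} is precisely designed to make this cancellation work (the ratios $p_{i;j}'/p_{i;r}'$ are chosen so that the specialization at $x_k=0$ produces the correct multiple of $\theta_i[\pp_i]$ evaluated at monomials involving only $x_j$'s, $j \neq k$).

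The main obstacle, and the one place where the argument genuinely differs from the classical proof, is the degree of $\theta_k$. In the ordinary case $d_k=1$ the exchange relation $x_k x_k' = p_k^+ u_{k;>0} + p_k^- u_{k;<0}$ is a binomial, and the specialization at $x_k=0$ is a single monomial, making the coprimality check a one-line computation. Here $\theta_k(u,v) = \sum_{\ell=0}^{d_k} p_{k;\ell} u^\ell v^{d_k-\ell}$ has $d_k+1$ terms, and one must check that the full sum (after substituting $x_k'=\theta_k(u_{k;>0},u_{k;<0})/x_k$ into the next exchange relation and expanding) contains a distinguished term which, upon setting $x_k=0$, survives with a nonzero coefficient from $\PP$. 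The divisibility hypothesis that every entry of the $k$th row of $B$ is divisible by $d_k$, together with the homogeneity $\theta_i[\pp_i](u,v)=u^{d_i}\rho_i[\pp_i](v/u)$, ensures that the substitution is compatible with integrality: all exponents that appear in the expansion remain integers, and the specialization argument goes through termwise. Once this lemma is in place, the Caterpillar induction closes exactly as in Fomin--Zelevinsky.
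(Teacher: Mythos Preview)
Your overall plan --- invoke the caterpillar lemma of Fomin--Zelevinsky and verify its hypotheses for the generalized exchange polynomials --- is exactly the route the paper takes. But the execution diverges at the crucial step, and what you actually check is not what the caterpillar lemma requires.

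The caterpillar lemma (as stated in the paper, following \cite{FZ1}) has three hypotheses. Conditions (1) and (2) are indeed trivial here, and the coprimality of $x_k'$ and $x_i''$ that you spend most of your effort on is essentially condition~(1): the exchange polynomial $P=\theta_k$ is not divisible by any cluster variable. The paper dismisses this in one line. The substantive hypothesis is condition~(3): for the path labeled $k,i,k$ with exchange polynomials $P,Q,R$, one must show
\[
R\big|_{x_i \leftarrow Q_0/x_i} \;=\; L\cdot Q_0^{\,b}\cdot P
\]
for some Laurent monomial $L$ and some nonpositive integer $b$, where $Q_0 = Q|_{x_k=0}$. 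Here $x_k$ denotes the cluster variable in the seed \emph{after} the first mutation $\mu_k$ (i.e., what you call $x_k'$), not the initial $x_k$. Your discussion of ``specializing $x_k\to 0$ in the numerator'' conflates these two and ends up checking something vacuous: the initial $x_k$ is a unit in the Laurent ring, so divisibility by it is meaningless.

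What the paper actually does, and what your sketch is missing, is the observation that $Q_0$ is a \emph{monomial} even though $Q=\theta_i$ has $d_i+1$ terms. This holds because $x_k'$ appears in exactly one of the two building blocks $u_{i;>0}(t')$, $u_{i;<0}(t')$, so setting it to zero annihilates all but one term of $\theta_i$. With $Q_0$ a monomial, the substitution $x_i\leftarrow Q_0/x_i$ in $R=\theta_k[t'']$ can be computed explicitly: the paper introduces $\tau_q = u_{>0}(q)/u_{<0}(q)$, shows that $\tau_{t''}|_{x_i\leftarrow Q_0/x_i}$ is a scalar multiple of $1/\tau_t$, and then invokes the coefficient mutation rule \eqref{eq:GenCoeffMutation} to conclude $\rho_{t;k}(\tau_t)=\rho_{t'';k}(\tau_{t''}|_{x_i\leftarrow Q_0/x_i})$, whence $R|_{x_i\leftarrow Q_0/x_i}=P\cdot\hat L$. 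Your proposal asserts that ``the coefficient mutation rule is precisely designed to make this cancellation work'' --- that is the right intuition, but the identity it is designed to produce is this one, not the nonvanishing-under-specialization statement you describe.
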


\begin{proof}
The proof uses the "caterpillar lemma"~\cite{FZ1}.

\begin{lm}\label{lm:caterpillar}
Assume that a generalized exchange pattern on $T_{n,m}$ satisfies the following conditions:
\begin{enumerate}
\item For any edge
the polynomial $P$ does not depend on $x_k$ and is not divisible by any $x_i$, $i\in[n]$.
\item Each exchange polynomial has nonnegative coefficients
\item For any three edges
labeled by $i,j,i$\newline
\smallskip
{\psset{unit=0.7}
\begin{pspicture}(-3,0.8)(3,-0.8)
\pcline[linewidth=2pt](-3,0)(3,0)
\pscircle[linewidth=1pt,fillstyle=solid,fillcolor=gray](-3,0){0.4}
\pscircle[linewidth=1pt,fillstyle=solid,fillcolor=gray](-1,0){0.4}
\pscircle[linewidth=1pt,fillstyle=solid,fillcolor=gray](1,0){0.4}
\pscircle[linewidth=1pt,fillstyle=solid,fillcolor=gray](3,0){0.4}
\rput(-2,0.2){\makebox(0,0)[cb]{$i$}}
\rput(-2,-0.2){\makebox(0,0)[ct]{$P$}}
\rput(0,0.2){\makebox(0,0)[cb]{$j$}}
\rput(0,-0.2){\makebox(0,0)[ct]{$Q$}}
\rput(2,0.2){\makebox(0,0)[cb]{$i$}}
\rput(2,-0.2){\makebox(0,0)[ct]{$R$}}
\rput(-3,0.2){\makebox(0,0)[ct]{$t$}}
\rput(-1,0.2){\makebox(0,0)[ct]{$t'$}}
\rput(1,0.2){\makebox(0,0)[ct]{$t''$}}
\rput(3,0.2){\makebox(0,0)[ct]{$t'''$}}
\end{pspicture}
}
\smallskip\newline
we have
$L\cdot Q_0^b\cdot P=R|_{x_j\leftarrow\frac{Q_0}{x_j}},$
where $b$ is a negative integer, $Q_0=Q|_{x_i=0}$, and $L$ is a Laurent monomial whose coefficient lies in $\mathbb A$
and is coprime with $P$.
\end{enumerate}
Then each element $x_i(t)$ for $i\in[n]$, $t\in T_{n,m}$ is a Laurent polynomial in $x_1(t_0),\ldots,x_n(t_0)$ with coefficients in $\mathbb A$.
\end{lm}

By definition of generalized cluster transformation $P=\theta_i[t],\ Q=\theta_j[t'],\ R=\theta_i[t'']$.

Note that parts (1) and (2) are evidently satisfied by generalized cluster mutations.
It remains to proof part (3). If $x_i$ is not included into any monomial of $Q$ then generalized mutation with labels $i$ and $j$ commute and the latter mutation is inverse to the former, namely, $x_i(t''')=x_i(t)$.

We consider the case where $x_i$ enters a monomial of $Q$. For simplicity we denote $b_{rs}(t')$ ($\beta_{rs}(t')$) by $b'_{rs}$ ($\beta'_{rs}$, resp.) and, specifically, $b_{ji}(t')$ by $a$.

By our assumptions $a\ne 0$. Moreover, without loss of generality we can assume that $a>0$, otherwise we replace $B$ by $-B$. Since $Q$ is determined by the homogeneous polynomial of two variables where only one variable contains a positive power of $x_i$ then $x_i$ enters all monomial of $Q$ but one. Hence,
$Q_0=Q|_{x_i=0}$ is a monomial.
Moreover, $Q_0=p_{j;0}\prod_{b'_{jk}<0} x_k^{-b'_{jk}}$.
Note that $P=\theta_i(u_{>0}(t'), u_{<0}(t'))$.
By the mutation rule~\ref{eq:ClusterMutation},
\begin{equation*}
    b_{il}(t'')=\left\{
                  \begin{array}{ll}
                    b'_{il}, & \hbox{ if } b'_{jl}\ge 0;\\
                    b'_{il}(t')-b'_{ij} b'_{jl}, & \hbox{ otherwise.}
                  \end{array}
                \right.
\end{equation*}

By the definition of generalized cluster transformation
$R=\theta_i(u_{>0}(t''), u_{<0}(t''))$.

For $q=t'$ or $t''$ introduce $\tau_q=\prod_\ell x_\ell(q)^{\beta_{i\ell}(q)}=\frac{u_{>0}(q)}{u_{<0}(q)}$.

Finally,
\begin{multline*}
    \tau_{t''}|_{x_j\leftarrow \frac{Q_0}{x_j}}=\prod_{l\ne j} x_l^{\beta'_{il}}\cdot
\left(\frac{Q_0}{x_j}\right)^{-\beta'_{ij}}\prod_{\beta_{jl}(t')<0} x_l^{-b'_{ij}b'_{jl}/d_i}=\\
=\prod_l x_l^{\beta_{il}(t)}\left(p_{j;0}\prod_{b'_{jl}<0} x_l^{-b'_{jl}}\right)^{-b'_{ij}/d_i}
\left(\prod_{b'_{jl}<0} x_l^{-b'_{ij} b'_{jl}/d_i}\right)=p_{j;0}^{-\beta'_{ij}}\tau_{t'}=\frac{1}{p_{j;0}^{\beta'_{ij}}\tau_{t}}
\end{multline*}

It is enough to notice that \ref{eq:GenClusterMutation} and \ref{eq:GenCoeffMutation} imply
that $\rho_{t;i}(\tau_{t})=\rho_{t'';i}\left(\tau_{t''}|_{x_j\leftarrow \frac{Q_0}{x_j}}\right)\cdot L$, 
where $L$ is a Laurent monomial.

Therefore, $R|_{x_j\leftarrow \frac{Q_0}{x_j}}=P\cdot\hat L$ where $\hat L$ is another Laurent monomial..
\end{proof}

\begin{theorem}\label{thm:generalizedCartanKilling}
Generalized cluster algebras of finite type satisfy the same Cartan-Killing classification as the standard cluster algebras.
\end{theorem}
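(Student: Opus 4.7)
The plan is to reduce the classification to the standard Fomin--Zelevinsky case by exploiting the fact that matrix mutation \eqref{eq:MatrixMutation} is identical in the generalized and the standard setting. Consequently, for any fixed initial $B$ and any mutation sequence, the sequence of exchange matrices produced in the generalized cluster algebra coincides with that in the standard cluster algebra. In particular, the mutation graph of exchange matrices depends only on $B$, and $B$ is mutation-finite in one theory iff it is in the other.

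Next I would use the Laurent phenomenon (Theorem \ref{thm:generalizedLaurent}) to define, for every generalized cluster variable $x$, a denominator vector $d(x)\in\ZZ^n$ by writing $x=N/\prod_i x_i^{d(x)_i}$, where $N$ is a Laurent polynomial in the initial cluster not divisible by any $x_i$. For the initial seed $d(x_i)=-e_i$, and for a single mutation $x_k'=\theta_k(u_{k;>0},u_{k;<0})/x_k$ one has $d(x_k')=e_k$, because $\theta_k$ is a genuine polynomial (no negative exponents) in initial variables and its nonnegative coefficients $p_{k;\ell}$ exclude cancellation in the Laurent expansion.

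Third, I would prove by induction along mutation sequences that the $d$-vector satisfies a piecewise-linear tropical mutation rule determined by $B$ and the degrees $d_k$: the role played by the binomial $\,x_k+x_k^{-1}\,$ in the standard tropicalization is now played by the homogeneous polynomial $\theta_k$ of degree $d_k$, which rescales the contribution of the mutation direction but does not change the combinatorial structure of the reachable set of $d$-vectors. This tropical mutation, together with the common $B$-matrix mutation, produces an injection from generalized cluster variables to the set of almost positive roots of the Cartan counterpart of $B$. By the Fomin--Zelevinsky classification \cite{FZ2}, this set is finite exactly when $B$ is of finite Cartan--Killing type; hence the generalized algebra is of finite type in the same cases. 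Conversely, when $B$ is not of finite Cartan--Killing type, an infinite mutation sequence which produces pairwise distinct standard $d$-vectors (as in \cite{FZ2}) produces pairwise distinct generalized $d$-vectors by the same compatibility, so the generalized algebra also contains infinitely many cluster variables.

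The main obstacle is establishing the compatibility between the tropical mutation of $d$-vectors in the generalized and the standard theories. The intermediate terms $p_{k;\ell}\,u_{k;>0}^\ell u_{k;<0}^{d_k-\ell}$ of $\theta_k$ introduce additional monomials whose denominator contributions must be shown to be dominated by those coming from the extreme terms $\ell=0$ and $\ell=d_k$. Positivity of the coefficients $p_{i;\ell}$ guarantees the absence of cancellations, which is precisely what allows one to identify the denominator vector of a mutated variable with the tropical mutation of the previous one. Once this compatibility is nailed down, the finite-type dichotomy of \cite{FZ2} transfers verbatim.
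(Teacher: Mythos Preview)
Your approach is genuinely different from the paper's, and it has a real gap.

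The paper's proof simply observes that the argument of \cite{FZ2} carries over because matrix mutation \eqref{eq:MatrixMutation} is unchanged; the only place where the exchange polynomial itself matters in the \cite{FZ2} argument is the rank-two analysis, and the paper redoes that by an explicit computation of the mutation cycles in types $A_2$, $B_2$, $G_2$ with arbitrary $\theta$-polynomials, exhibiting periods $5$, $6$, and $8$ respectively. No $d$-vectors, no tropicalization.

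Your proposal instead tries to bootstrap from denominator vectors. The gap is the sentence ``Positivity of the coefficients $p_{i;\ell}$ guarantees the absence of cancellations, which is precisely what allows one to identify the denominator vector of a mutated variable with the tropical mutation of the previous one.'' This is not correct: positivity of the exchange-polynomial coefficients does \emph{not} prevent cancellation in the Laurent expansion of a cluster variable several mutations away from the initial seed. What you would need is positivity of the Laurent expansion of every generalized cluster variable in the initial cluster, and in this paper that is only a conjecture (Conjecture~\ref{generalizedPositivity}), not a theorem. Without it, the numerator $N$ in $x=N/\prod x_i^{d_i}$ could acquire a factor of some $x_i$, so the $d$-vector need not obey the piecewise-linear recursion you describe, and the injection into almost positive roots is not established. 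Even for ordinary cluster algebras the $d$-vector recursion is delicate and was not the route taken in \cite{FZ2}.

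If you want to salvage the strategy, you would either have to prove the needed positivity independently, or replace $d$-vectors by $g$-vectors/$c$-vectors with principal coefficients and prove a sign-coherence statement in the generalized setting---both substantially harder than the paper's direct rank-two verification.
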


\begin{proof}
The proof repeats the one of~\cite{FZ2}. The only differences make the proofs of the fact that the only finite type generalized cluster algebras of rank two correspond to $A_2, B_2, G_2$ types. It is checked by direct computation similar to one in \cite{FZ2}. Note first that in $A_2$-case formulas for generalized cluster transformation coincide with formulas for the standard cluster transformation. In $B_2$-case the polynomial
degrees of theta-polynomials are two and one.  Set the theta polynomials in the initial cluster  $\theta_1(u,v)=a u^2+b u v+c v^2$, $\theta_2(u,v)=p u+q v$.
Then, we immediately obtain $(x,y)\overset{\mu_1}\longleftrightarrow(\mu_1)(x_1,y)\overset{\mu_2}\longleftrightarrow (x_1,y_1)\overset{\mu_1}\longleftrightarrow (x_2,y_1)\overset{\mu_2}\longleftrightarrow
(x_2,y_2)\overset{\mu_1}\longleftrightarrow (x,y_2)\overset{\mu_2}\longleftrightarrow (x,y)$, where

$x_1=(a+by+cy^2)/x$,

$y_1=(px+qa+bqy+cqy^2)/xy$,

$x_2=(a^2q^2+2ap q x+acq^2 y^2+abq^2 y+bpq  x  y+p^2  x^2)/x  y^2$,

$y_2=(q  a+p   x)/y$.

Similar computations lead to the 8-cycle in $G_2$-case. Note that degrees of polynomials $\deg(\theta_1)=3$, $\deg(\theta_2)=1$. We set
$\theta_1(u,v)=a u^3+b u^2 v+c u v^2+ d v^3$, $\theta_2(u,v)=p u+q v$.
$(x,y)\overset{\mu_1}\longleftrightarrow(\mu_1)(x_1,y)\overset{\mu_2}\longleftrightarrow (x_1,y_1)\overset{\mu_1}\longleftrightarrow (x_2,y_1)\overset{\mu_2}\longleftrightarrow
(x_2,y_2)\overset{\mu_1}\longleftrightarrow (x_3,y_2)\overset{\mu_2}\longleftrightarrow (x_3,y_3)\overset{\mu_1}\longleftrightarrow (x,y_3)\overset{\mu_2}\longleftrightarrow (x,y)$, where

$x_1=(a+b  y+c  y^2+d  y^3)/x$,

$y_1=(p  x+aq +bq  y+cq  y^2+dq  y^3)/x  y$,

$x_2=(a^3q^3 +2 a^2 cq^3  y^2+2a^2 d q^3 y^3+3 a^2 p  q^2  x+2a^2 b q^3  y+2a b d q^3 y^4 +3 a p^2 q x^2
+4 a  b p   q^2  x  y+a c^2 q^3  y^4+2acd  q^3  y^5 +ad^2q^3  y^6+3ac  p  q^2  x  y^2+3ad  p   q^2  x  y^3+
a b^2 q^3   y^2
+2 a  b c q^3  y^3 +bc p   q^2  x  y^3 +bd p   q^2  x  y^4 +
p^3  x^3+b^2 p   q^2 x  y^2+2 b p^2 q x^2  y+p^2 cq x^2  y^2)/x^2  y^3$,

$y_2=(q^2  a^2+abq^2 y+acq^2 y^2+adq^2 y^3+2 a p   q  x+bp  q  x  y+p^2  x^2)/xy^2 $,

$x_3=(  a^2  d q^3 y^3+  a^2  c q^3 y^2+acp  q^2  x  y^2+  a^2  b q^3  y+2 ab p   q^2  x  y+bp^2 q x^2  y+
a^3 q^3 +3 a^2 p  q^2  x+3  a p^2 q x^2 +p^3  x^3)/x y^3$,

$y_3=(p  x+a q )/y$.

\end{proof}

We note that according to~\cite{GSV} the generalized cluster transformation preserves presymplectic structure compatible with cluster algebra structure. Similarly, the secondary generalized cluster transformation preserves
the compatible Poisson bracket.


\begin{theorem}\label{thm:generalizedCompatible}(\cite{GSV}) Poisson structure compatible with a cluster algebra is compatible with the corresponding generalized cluster transformations.
\end{theorem}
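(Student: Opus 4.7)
The plan is to reduce the claim to a single generalized mutation $\mu_k$ and observe that, because the matrix-mutation rule~\eqref{eq:MatrixMutation} is inherited unchanged from the standard case, the only new ingredient beyond~\cite{GSV} is the fact that the generalized exchange polynomial $\theta_k(u_{k;>0},u_{k;<0})$ contains $d_k+1$ monomials rather than two. If the log-canonical property of the Poisson bracket survives this replacement, then the induced transformation $\Omega\mapsto\Omega'$ coincides with the transposed matrix mutation of~\cite{GSV}, and compatibility with $B'=\mu_k(B)$ follows from their result.

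To check log-canonicity of the new cluster, note first that $\{x'_i,x'_j\}=\{x_i,x_j\}$ for $i,j\ne k$, so the only brackets to examine are $\{x'_k,x_i\}$ for $i\ne k$. Expanding $x'_k=\theta_k(u_{k;>0},u_{k;<0})/x_k$ into the monomials $M_\ell=p_{k;\ell}u_{k;>0}^{\ell}u_{k;<0}^{d_k-\ell}$, $\ell=0,\dots,d_k$, each $M_\ell$ is itself log-canonical with $x_i$, and a direct computation yields
$$\frac{\{M_\ell,x_i\}}{M_\ell\,x_i}=\ell\sum_{j}\beta_{k,j}\omega_{j,i}-d_k\sum_{\beta_{k,j}<0}\beta_{k,j}\omega_{j,i},$$
a \emph{linear} function of $\ell$ with slope $\sum_{j}\beta_{k,j}\omega_{j,i}$. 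Evaluated at $\ell=0$ and $\ell=d_k$ this recovers $d_k$ times the log-canonical bracket coefficients of $u_{k;<0}$ and $u_{k;>0}$ with $x_i$, i.e., exactly the two coefficients whose equality is postulated by the standard GSV compatibility axiom applied to $(B,\Omega)$. Under that hypothesis the slope vanishes, all $d_k+1$ monomials share a common log-canonical coefficient with $x_i$, and therefore $\theta_k$ (and hence $x'_k$) is log-canonical with $x_i$; the resulting $\omega'_{k,i}$ is precisely the one produced by the standard cluster Poisson transformation. Iterating along any sequence of generalized mutations then yields global compatibility.

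The main obstacle I expect is the bookkeeping around index conventions---keeping straight whether the relevant compatibility statement reads $B\Omega$, $B^{T}\Omega$, $\Omega B$, or $\Omega B^{T}$ (block-)diagonal, and reconciling the row-based normalization $\beta_{k,j}=b_{k,j}/d_k$ used in~\eqref{eq:GenClusterMutation} with the column-based form of the standard mutation rule. These conventions are all equivalent via the skew-symmetrizer of $B$ together with $\Omega^{T}=-\Omega$, and once a single choice is fixed the crucial vanishing is the one-line linear-algebra identity displayed above, which reduces the generalized case entirely to the standard one.
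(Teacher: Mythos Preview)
Your argument is correct. The paper itself gives no proof of this theorem---it simply records the statement with a citation to \cite{GSV}, where the relevant class of ``generalized'' exchange transformations was already observed to preserve the compatible Poisson bracket. What you have written is therefore not a different proof but rather an explicit unpacking of that citation.

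Your organization is clean and hits the right point: the only thing that can go wrong when passing from two monomials to $d_k+1$ monomials is that they might fail to share a common log--canonical coefficient with $x_i$. Your computation
\[
\frac{\{M_\ell,x_i\}}{M_\ell\,x_i}
=\ell\sum_{j}\beta_{k,j}\,\omega_{j,i}
-d_k\!\!\sum_{\beta_{k,j}<0}\!\!\beta_{k,j}\,\omega_{j,i}
\]
makes this transparent: the value is affine in $\ell$, and the slope is precisely $(1/d_k)(B\Omega)_{ki}$, which vanishes for $i\ne k$ under the compatibility hypothesis. Hence all $d_k+1$ monomials agree, and the rest is the standard GSV computation verbatim, since both the matrix mutation~\eqref{eq:MatrixMutation} and the resulting transformation $\Omega\mapsto\Omega'$ coincide with the ordinary case.

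Your caveat about index conventions is well placed but, as you say, purely bookkeeping: once one fixes whether compatibility reads $B\Omega$ or $\Omega B$ diagonal, the identity above and its consequence go through without change.
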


%
%

In the next section we describe generalized cluster structure associated with triangulated surfaces with orbifold points.
%
%
%
%

\section{Teichm\"uller space of surfaces with holes and orbifold points of arbitrary order}\label{s:geometry}

We now demonstrate how the above mutations with reciprocal polynomials of the second order appear in the description of
Teichm\"uller spaces of Riemann surfaces of arbitrary genus with nonzero number of holes (punctures) and with
an arbitrary number $r$ of orbifold points of arbitrary orders.

\subsection{The ideal triangle decompositions of orbifold Riemann surfaces and cluster variables}\label{ss:clusters}

We now present the geometric
pattern underlying the algebraic construction of cluster variables corresponding to
orbifold Riemann surfaces.

Particular cases of Riemann surfaces with orbifold points of order 2 and 3 are discussed in~\cite{Ch1,Ch2}, \and\cite{ChM}.
For relation between skew-symmetrizable cluster algebras of finite mutation type and
surfaces with orbifold points see \cite{FST-Orb}.

We consider a regular Riemann surface $\Sigma_{g,s,r}$
 of genus $g$ with $s>0$ holes and with a number $r\ge 0$ of orbifold points; orders of
these points $p_i$, $i=1,\dots,r$ are positive integers greater than one.

We introduce the marking on the set of orbifold points splitting this set into nonintersecting
(possibly empty) subsets $\delta_k$,
$k=1,\dots,s$, $\sum_{k=1}^s|\delta_k|=r$. For every $k$,
we then assign the subset $\delta_k$ to the $k$th hole and introduce a cyclic ordering
inside each subset $\delta_k$.

To construct the generalization of the ideal triangle decomposition \cite{Fock1,Penn1}, we first remove
from the surface all the hyperbolic domains of holes bounded by the corresponding
perimeter geodesic lines (with their closures, which are these perimeter lines).
Second, we choose for each orbifold point from the set $\delta_k$ a domain containing this point and bounded by a geodesic curve whose both ends spiral to the $k$th hole as shown in the right part of
Fig.~\ref{fi:saucer-pan}.\footnote{In the case of a $\mathbb Z_2$-orbifold point, this domain has zero area because the corresponding
geodesics goes straight to the orbifold point, reflects at it, and repeats its path in the opposite direction.}
We remove from the remaining part of the surface all such domains.
The remaining part of $\Sigma_{g,s,r}$ admits splitting into ideal triangles; a copy of this
splitting can be drawn as a connected ideal polygon in the Poincar\'e disc; the sides of
this polygon are of two sorts: those that are not pre-images of geodesics going around orbifold points
must be pairwise identified; to a side that is a pre-image of the geodesics going around ${\mathbb Z}_p$
orbifold point we attach (from outside) an equilateral ideal $p$-gone with the orbifold point situated at
its geodesic center; this $p$-gone is the $p$-fold covering of the removed domain enclosing the orbifold point.

An example of a fundamental domain of the Riemann surface $\Sigma_{1,1,2}$ of genus one with one hole and
with two ${\mathbb Z}_3$ orbifold points is in Fig.~\ref{fi:disc}.

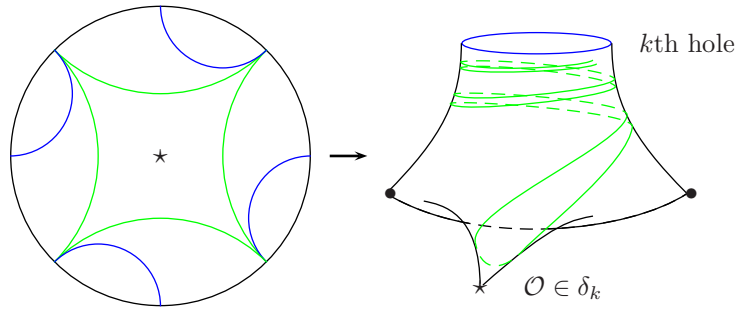
\begin{figure}[tb]
{\psset{unit=0.5}
\begin{pspicture}(-9,1)(9,-7)
\rput(-5,-3){
\pscircle[linewidth=0.5pt](0,0){4}
\rput(0,0){\psarc[linecolor=green, linewidth=0.5pt](5.657,0){4}{135}{225}}
\rput(0,0){\psarc[linecolor=blue, linewidth=0.5pt](4,-1.657){1.657}{90}{225}}
\rput{90}(0,0){\psarc[linecolor=green, linewidth=0.5pt](5.657,0){4}{135}{225}}
\rput{90}(0,0){\psarc[linecolor=blue, linewidth=0.5pt](4,-1.657){1.657}{90}{225}}
\rput{180}(0,0){\psarc[linecolor=green, linewidth=0.5pt](5.657,0){4}{135}{225}}
\rput{180}(0,0){\psarc[linecolor=blue, linewidth=0.5pt](4,-1.657){1.657}{90}{225}}
\rput{270}(0,0){\psarc[linecolor=green, linewidth=0.5pt](5.657,0){4}{135}{225}}
\rput{270}(0,0){\psarc[linecolor=blue, linewidth=0.5pt](4,-1.657){1.657}{90}{225}}
\rput(0,0){\makebox(0,0){\large $\star$}}
}
\rput(0,-3){\pcline[linewidth=1pt]{->}(-0.5,0)(0.5,0)}
\rput(5,0){
\psellipse[linecolor=blue, linewidth=0.5pt](0,0)(2,0.3)
\rput(4,0){\makebox(0,0){$k$th hole}}
\psbezier[linewidth=0.5pt](-4,-4)(-2.5,-2.5)(-2,-1.5)(-2,0)
\psbezier[linewidth=0.5pt](4,-4)(2.5,-2.5)(2,-1.5)(2,0)
\psbezier[linewidth=0.5pt](-4,-4)(-1.5,-5.5)(2.5,-5)(4,-4)
\psframe[linecolor=white, fillstyle=solid, fillcolor=white](-1.8,-5.3)(0.3,-4)
\psbezier[linestyle=dashed, linewidth=0.5pt](-4,-4)(-1.5,-5.5)(2.5,-5)(4,-4)
\psbezier[linewidth=0.5pt](-3,-4.2)(-2.5,-4.3)(-1.5,-4.5)(-1.5,-6.5)
\psbezier[linewidth=0.5pt](1.5,-4.6)(1,-4.7)(0.5,-4.7)(-1.5,-6.5)
\pscircle[linecolor=white, fillstyle=solid, fillcolor=white](-1.5,-6.5){0.1}
\psbezier[linecolor=green, linestyle=dashed, linewidth=0.5pt](-0.35,-5.5)(-0.8,-6)(-1.46,-6.2)(-1.63,-5.3)
\psbezier[linecolor=green, linewidth=0.5pt](-0.35,-5.5)(0.1,-5)(2.45,-3.2)(2.55,-2.2)
\psbezier[linecolor=green, linewidth=0.5pt](-1.63,-5.3)(-1.8,-4.4)(2.35,-2.7)(2.41,-1.9)
\psbezier[linecolor=green, linestyle=dashed, linewidth=0.5pt](-2.3,-1.6)(-2,-1.5)(2.4,-1.8)(2.55,-2.2)
\psbezier[linecolor=green, linestyle=dashed, linewidth=0.5pt](-2.2,-1.35)(-1.8,-1.25)(2.3,-1.5)(2.41,-1.9)
\psbezier[linecolor=green, linewidth=0.5pt](-2.3,-1.6)(-2.2,-1.9)(2,-1.45)(2.1,-1.15)
\psbezier[linecolor=green, linewidth=0.5pt](-2.2,-1.35)(-2.1,-1.65)(2,-1.25)(2.08,-.95)
\psbezier[linecolor=green, linestyle=dashed, linewidth=0.5pt](-2.03,-.65)(-1.4,-0.5)(1.9,-.85)(2.1,-1.15)
\psbezier[linecolor=green, linestyle=dashed, linewidth=0.5pt](-2.03,-.5)(-1.4,-.37)(1.86,-.65)(2.08,-.95)
\psbezier[linecolor=green, linewidth=0.5pt](-2.03,-.65)(-1.95,-0.87)(1.3,-.75)(1.6,-.55)
\psbezier[linecolor=green, linewidth=0.5pt](-2.01,-.5)(-1.92,-0.78)(1.2,-.67)(1.5,-.45)
\rput(-4,-4){\makebox(0,0){ $\bullet$}}
\rput(4,-4){\makebox(0,0){ $\bullet$}}
\rput(-1.5,-6.5){\makebox(0,0){\large $\star$}}
\rput(0.7,-6.5){\makebox(0,0){${\mathcal O}\in\delta_k$}}
}
\end{pspicture}
}
\caption{\small The right part represents an example of a regular genus zero
Riemann surface contaning a ${\mathbb Z}_4$-orbifold point ${\mathcal O}$ marked by a star; the bounding
geodesic line is the image of a side of an
ideal equilateral square in the Poincar\'e disc depicted in the left part.
Both ends of the geodesic line spiral asymptotically to the
closed geodesic that is the boundary of the $k$th hole.}
\label{fi:saucer-pan}
\end{figure}

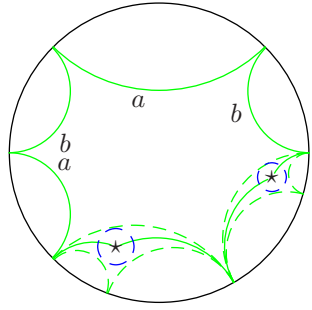
\begin{figure}[tb]
{\psset{unit=0.5}
\begin{pspicture}(-10,-5)(8,5)
\pscircle[linewidth=0.5pt](0,0){4}
\psarc[linecolor=green, linewidth=0.5pt](-4,1.64){1.64}{-90}{45}
\psarc[linecolor=green, linewidth=0.5pt](0,5.66){4}{-135}{-45}
\psarc[linecolor=green, linewidth=0.5pt](4,1.64){1.64}{135}{270}
\psarc[linestyle=dashed,linecolor=green, linewidth=0.5pt](4,-2.31){2.31}{90}{210}
\rput{142}(0,0){\psarc[linestyle=dashed, linecolor=green, linewidth=0.5pt](-4.314,0){1.616}{-68}{68}}
\rput{172}(0,0){\psarc[linestyle=dashed, linecolor=green, linewidth=0.5pt](-4.04,0){.56}{-82}{82}}
\rput{146}(0,0){\psarc[linecolor=green, linewidth=0.5pt](-4.45,0){1.95}{-32}{64}}
\rput{165}(0,0){\psarc[linecolor=green, linewidth=0.5pt](-4.14,0){1.07}{-75}{-14}}
\psarc[linestyle=dashed, linecolor=green, linewidth=0.5pt](-.67,-4.99){3.06}{30}{135}
\rput{57.5}(0,0){\psarc[linestyle=dashed, linecolor=green, linewidth=0.5pt](-4.09,0){.85}{-78}{78}}
\rput{95}(0,0){\psarc[linestyle=dashed, linecolor=green, linewidth=0.5pt](-4.40,0){1.86}{-65}{65}}
\rput{66}(0,0){\psarc[linecolor=green, linewidth=0.5pt](-4.285,0){1.535}{6}{69}}
\rput{89}(0,0){\psarc[linecolor=green, linewidth=0.5pt](-4.666,0){2.4}{-59}{25}}
\psarc[linecolor=green, linewidth=0.5pt](-4,-1.64){1.64}{-45}{90}
\rput(3,-0.65){\makebox(0,0){$\star$}}
\pscircle[linestyle=dashed, linecolor=blue, linewidth=0.5pt](3,-.65){0.4}
\rput(-1.15,-2.5){\makebox(0,0){$\star$}}
\pscircle[linestyle=dashed, linecolor=blue, linewidth=0.5pt](-1.15,-2.5){0.5}
\rput(-2.65,0.5){\makebox(0,0)[lt]{$b$}}
\rput(-2.7,-0.45){\makebox(0,0)[lb]{$a$}}
\rput(2.2,1.3){\makebox(0,0)[rt]{$b$}}
\rput(-0.55,1.5){\makebox(0,0)[ct]{$a$}}
\end{pspicture}
}
\caption{\small
The Poincar\'e disc with depicted fundamental domain for the genus-one surface $\Sigma_{1,1,2}$ with
one hole and two ${\mathbb Z}_3$ orbifold points (marked by $\star$). Solid lines
constitute the boundary of a fundamental domain and dashed lines are sides of the related ideal triangles.
The sides with labels $a$ and $b$ are pairwise identified.
Inscribed circles with centers at the ${\mathbb Z}_3$ orbifold points all have the radius $\frac12\log3$.}
\label{fi:disc}
\end{figure}

\begin{figure}[tb]
{\psset{unit=0.5}
\begin{pspicture}(6,-5)(0,5)
\pscircle[linewidth=0.5pt](0,0){4}
\newcommand{\PATTERN}{%
\psarc[linecolor=green, linewidth=1pt](4.62,0){2.3}{120}{240}
\rput{10}(0,0){\rput(2.3,0){\makebox(0,0)[rc]{$c$}}}
\rput{-20}(0,0){\psarc[linestyle=dashed, linewidth=1pt](4.06,0){.7}{100}{260}}
\rput{10}(0,0){\psarc[linestyle=dashed, linewidth=1pt](4.26,0){1.45}{110}{250}}
\rput{-20}(0,0){\rput(3.9,0){\makebox(0,0)[rc]{$a$}}}
\rput{10}(0,0){\rput(3.8,0){\makebox(0,0)[rc]{$b$}}}
}
\rput{30}(0,0){\PATTERN}
\rput{90}(0,0){\PATTERN}
\rput{150}(0,0){\PATTERN}
\rput{210}(0,0){\PATTERN}
\rput{270}(0,0){\PATTERN}
\rput{330}(0,0){\PATTERN}
\rput{60}(0,0){
\psarc[linestyle=dashed,linecolor=green, linewidth=1pt](8,0){6.93}{150}{210}
}
\rput{-60}(0,0){
\psarc[linestyle=dashed,linecolor=green, linewidth=1pt](8,0){6.93}{150}{210}
}
\pcline[linestyle=dashed,linecolor=green, linewidth=1pt](-4,0)(4,0)
\pcline[linewidth=2pt]{->}(6,0)(7,0)
\rput(0,0){\makebox(0,0){$\star$}}
\rput(-0.6,0.1){\makebox(0,0)[cb]{$c_3$}}
\rput(1,-0.9){\makebox(0,0)[ct]{$c_2$}}
\rput(.4,1.2){\makebox(0,0)[cb]{$c_2$}}
\end{pspicture}
%
%
\begin{pspicture}(-6,-5)(0,5)
\pscircle[linewidth=0.5pt](0,0){4}
\newcommand{\PATTERN}{%
\rput{20}(0,0){\psarc[linecolor=green, linewidth=1pt](4.62,0){2.3}{120}{240}}
\rput{30}(0,0){\rput(2.2,-0.3){\makebox(0,0)[rc]{$c'$}}}
\rput{-20}(0,0){\psarc[linestyle=dashed, linewidth=1pt](4.06,0){.7}{100}{260}}
\rput{10}(0,0){\psarc[linestyle=dashed, linewidth=1pt](4.26,0){1.45}{110}{250}}
\rput{-20}(0,0){\rput(3.9,0){\makebox(0,0)[rc]{$a$}}}
\rput{10}(0,0){\rput(3.8,0){\makebox(0,0)[rc]{$b$}}}
}
\rput{30}(0,0){\PATTERN}
\rput{90}(0,0){\PATTERN}
\rput{150}(0,0){\PATTERN}
\rput{210}(0,0){\PATTERN}
\rput{270}(0,0){\PATTERN}
\rput{330}(0,0){\PATTERN}
\rput{80}(0,0){
\psarc[linestyle=dashed,linecolor=green, linewidth=1pt](8,0){6.93}{150}{210}
}
\rput{-40}(0,0){
\psarc[linestyle=dashed,linecolor=green, linewidth=1pt](8,0){6.93}{150}{210}
}
\rput{20}(0,0){
\pcline[linestyle=dashed,linecolor=green, linewidth=1pt](-4,0)(4,0)
}
\rput(0,0){\makebox(0,0){$\star$}}
\rput(-0.6,0.1){\makebox(0,0)[cb]{$c'_3$}}
\rput(0.3,-0.2){\makebox(0,0)[ct]{$c'_2$}}
\rput(.4,1.2){\makebox(0,0)[cb]{$c'_2$}}
\end{pspicture}
}
\caption{\small
The mutation in the ideal $p$-gone (here $p=6$), which is the $p$-fold covering
of the domain around a $\mathbb Z_p$-orbifold point (marked by $\star$). We must perform
a sequence of standard
$2$-term mutations (\ref{diagonal}) on the set of cluster variables to come from the
pattern in the left-hand side to the one in the right-hand side.}
\label{fi:p-gone}
\end{figure}
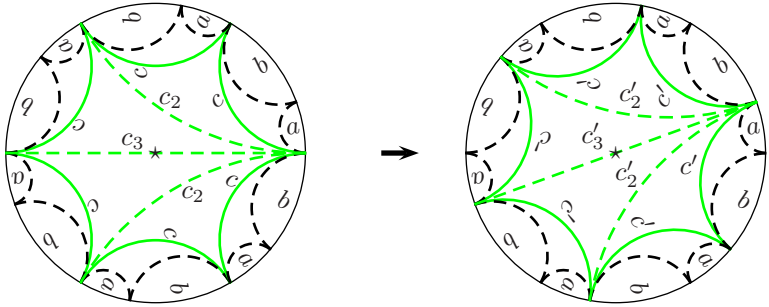

We use the standard geometric correspondence between cluster variables and $\lambda$-lengths: at each point at the
absolute that is a vertex of an ideal triangle we set an horocycle; the $\lambda$-length $\ell_a$ is then
the (signed) geodesic length of the part of side $a$ enclosed between two horocycles based at its endpoints,
or the signed distance between horocycles: $\ell_a$ is negative when the corresponding horocycles
overlap. The correspondence reads
\be
\label{lambda-l}
a=e^{\ell_a/2},
\ee
and $a$ is the cluster variable associated with the edge.

For an ideal quadrangle with the (cyclically enumerated) sides
$a_i$, $i=1,\dots,4$ and with diagonals $d$ and $d'$ we have the celebrated two-term cluster relation
\be
\label{diagonal}
dd'=a_1a_3+a_2a_4,
\ee
which holds independently of the choice of horocycles.

We now consider mutations for cluster variables of the ideal $p$-gone corresponding to a $\mathbb Z_p$-orbifold point.
We consider the pattern in the left-hand side of Fig.~\ref{fi:p-gone} and perform a sequence of mutations (\ref{diagonal})
to come to the pattern in the right-hand side. For an equilateral $p$-gone, the cluster variable $c_k$ for a $k$-diagonal is
\be
\label{ck}
c_k=c\frac{\sin(\pi k/p)}{\sin(\pi/p)},\qquad (c_1= c).
\ee
The easy combinatorics then yields
\begin{lm}\label{cc-prime} $\lambda$-lengths satisfy the following relation
$$
cc'=a^2+2\cos(\pi/p)ab+b^2.
$$
\end{lm}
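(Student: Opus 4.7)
My plan is to lift the picture to the $p$-fold cover of the orbifold domain, namely the equilateral ideal $p$-gone of Fig.~\ref{fi:p-gone}, and to realize the three-term identity $cc'=a^2+2\cos(\pi/p)\,ab+b^2$ as the composition of two standard two-term Ptolemy flips \eqref{diagonal}, as already indicated by the caption of that figure. I label the vertices of the $p$-gone $V_0,\ldots,V_{p-1}$ on the absolute in cyclic order, so that every side $V_iV_{i+1}$ has $\lambda$-length $c$ and every diagonal $V_iV_{i+k}$ has $\lambda$-length $c_k=c\sin(k\pi/p)/\sin(\pi/p)$ by \eqref{ck}. To each side I attach the external ideal triangle whose third vertex $W_i$ lies on the absolute, with $V_iW_i=a$ and $W_iV_{i+1}=b$; because all $p$ external triangles are lifts of a single triangle in the orbifold, $a$ and $b$ do not depend on $i$. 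The cyclic order on the absolute is $V_0,W_0,V_1,W_1,\ldots,V_{p-1},W_{p-1}$.

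Step one: triangulate the $p$-gone (for instance by all diagonals from $V_0$) and flip the edge $c=V_0V_1$ inside the ideal quadrangle $V_0,W_0,V_1,V_2$, whose outer sides are $a,b,c,c_2$. Using the trigonometric identity $c_2=2c\cos(\pi/p)$, which follows immediately from \eqref{ck}, the relation \eqref{diagonal} yields the new edge
\[
W_0V_2=\frac{ac+bc_2}{c}=a+2b\cos(\pi/p).
\]
Step two: flip the neighbouring side $V_1V_2=c$ inside the ideal quadrangle $W_0,V_1,W_1,V_2$, whose outer sides are now $b,a,b,a+2b\cos(\pi/p)$. A second application of \eqref{diagonal} gives
\[
c\cdot(W_0W_1)=b\cdot b+a\cdot\bigl(a+2b\cos(\pi/p)\bigr)=a^2+2\cos(\pi/p)\,ab+b^2.
\]

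The only non-routine point, and what I expect to be the main obstacle, is the identification of $c'$ inside the cover. Inspection of Fig.~\ref{fi:p-gone} shows that the flipped edge in the orbifold lifts to a chord joining external vertices of two adjacent flaps, and the rotational symmetry of the cover forces $W_iW_{i+1}$ to have the same $\lambda$-length for every $i$; in particular $c'=W_0W_1$. Substituting this into the previous display gives $cc'=a^2+2\cos(\pi/p)\,ab+b^2$, as required. Iterating the same pair of flips around each of the $p$ sectors carries out the full passage from the left-hand to the right-hand pattern of Fig.~\ref{fi:p-gone}, but only a single pair of flips is needed to establish the stated $\lambda$-length identity.
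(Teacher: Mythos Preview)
Your argument is correct and is precisely the approach the paper intends: the text only says ``the easy combinatorics then yields'' the lemma, and the subsequent Remark confirms that the proof is meant to proceed via a sequence of two-term relations~\eqref{diagonal}. Your explicit realization via the two successive flips $V_0V_1\to W_0V_2$ and $V_1V_2\to W_0W_1$, together with the identity $c_2=2c\cos(\pi/p)$ from~\eqref{ck}, is exactly the ``easy combinatorics'' in question, and your identification of $c'$ with $W_0W_1$ matches Fig.~\ref{fi:p-gone}.
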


\begin{remark} Relation in Lemma~\ref{cc-prime} is a generalized cluster mutation described in Sec.~\ref{s-algebra}.
Note that since we use only two-term transformations \ref{diagonal} to prove Lemma~\ref{cc-prime}, the positivity property for
generalized mutations of such form follows from the one for the standard mutations.
\end{remark}

To simplify the description, it is convenient to introduce the notion of {\em petal surface}. Petals are the domains containing orbifold points that were removed on the second step of constructing the ideal triangulation.  To the
$k$th hole, we associate the bouquet of $|\delta_k|$ petals (with no petals if $\delta_k$ is empty), each petal carries,
besides its cluster variable, the
number $\omega_p=2\cos(\pi/p)$. The mutation then occurs inside the corresponding ideal triangle (painted by a light color), and we have three cases
depending on whether the adjacent sides are petals themselves:
\begin{center}
{\psset{unit=1.2}
\begin{pspicture}(-1.5,-2.5)(1.5,2.5)
\psarc[linewidth=1pt,fillstyle=solid,fillcolor=yellow](0,.923){1.155}{30}{150}
\psarc[linewidth=1pt,fillstyle=solid,fillcolor=yellow](0,2.077){1.155}{210}{330}
\psbezier[linewidth=1pt,fillstyle=solid,fillcolor=green](-1,1.5)(0,2.5)(0,0.5)(-1,1.5)
\pscircle*(-1,1.5){0.05}
\pscircle*(1,1.5){0.05}
\pcline[linewidth=1pt]{->}(0,0.3)(0,-0.3)
\psarc[linewidth=1pt,fillstyle=solid,fillcolor=yellow](0,-.923){1.155}{210}{330}
\psarc[linewidth=1pt,fillstyle=solid,fillcolor=yellow](0,-2.077){1.155}{30}{150}
\psbezier[linewidth=1pt,fillstyle=solid,fillcolor=green](1,-1.5)(0,-2.5)(0,-0.5)(1,-1.5)
\pscircle*(-1,-1.5){0.05}
\pscircle*(1,-1.5){0.05}
\rput(-0.5,1.5){\makebox(0,0)[cc]{$\omega_p$}}
\rput(-0.1,1.5){\makebox(0,0)[cc]{$c$}}
\rput(0,2.2){\makebox(0,0)[cb]{$a$}}
\rput(0,0.8){\makebox(0,0)[ct]{$b$}}
\rput(-1.1,1.5){\makebox(0,0)[rc]{$v_1$}}
\rput(1.1,1.5){\makebox(0,0)[lc]{$v_2$}}
\rput(0.5,-1.5){\makebox(0,0)[cc]{$\omega_p$}}
\rput(0.05,-1.5){\makebox(0,0)[cc]{$c'$}}
\rput(0,-2.2){\makebox(0,0)[ct]{$b$}}
\rput(0,-0.8){\makebox(0,0)[cb]{$a$}}
\rput(-1.1,-1.5){\makebox(0,0)[rc]{$v_1$}}
\rput(1.1,-1.5){\makebox(0,0)[lc]{$v_2$}}
\end{pspicture}
}
{\psset{unit=1.2}
\begin{pspicture}(-1.5,-2.5)(1,2.5)
\psbezier[linewidth=1pt,fillstyle=solid,fillcolor=yellow](-1,1.5)(1,4.5)(1,-1.5)(-1,1.5)
\psbezier[linewidth=1pt,fillstyle=solid,fillcolor=green](-1,1.5)(0,2.5)(.5,1.5)(-1,1.5)
\psbezier[linewidth=1pt,fillstyle=solid,fillcolor=green](-1,1.5)(0,0.5)(.5,1.5)(-1,1.5)
\pscircle*(-1,1.5){0.05}
\pcline[linewidth=1pt]{->}(0,0.3)(0,-0.3)
\psbezier[linewidth=1pt,fillstyle=solid,fillcolor=yellow](-1,-1.5)(1,-4.5)(1,1.5)(-1,-1.5)
\psbezier[linewidth=1pt,fillstyle=solid,fillcolor=green](-1,-1.5)(0,-2.5)(.5,-1.5)(-1,-1.5)
\psbezier[linewidth=1pt,fillstyle=solid,fillcolor=green](-1,-1.5)(0,-0.5)(.5,-1.5)(-1,-1.5)
\pscircle*(-1,-1.5){0.05}
\rput(-0.4,1.75){\makebox(0,0)[cc]{$\omega_p$}}
\rput(-0.4,1.25){\makebox(0,0)[cc]{$\omega_q$}}
\rput(0,1.8){\makebox(0,0)[lc]{$c$}}
\rput(0.6,1.5){\makebox(0,0)[lc]{$a$}}
\rput(0,1.2){\makebox(0,0)[lc]{$b$}}
\rput(-0.4,-1.75){\makebox(0,0)[cc]{$\omega_p$}}
\rput(-0.4,-1.25){\makebox(0,0)[cc]{$\omega_q$}}
\rput(0,-1.8){\makebox(0,0)[lc]{$c'$}}
\rput(0,-1.2){\makebox(0,0)[lc]{$b$}}
\rput(0.6,-1.5){\makebox(0,0)[lc]{$a$}}
\end{pspicture}
}
{\psset{unit=1.2}
\begin{pspicture}(-1.5,-2.5)(1.5,2.5)
\pscircle[linewidth=1pt,linecolor=white,fillstyle=solid,fillcolor=yellow](0,1.5){1.1}
\psbezier[linewidth=1pt,fillstyle=solid,fillcolor=green](0,1.5)(-0.3,2.8)(1.3,1.7)(0,1.5)
\psbezier[linewidth=1pt,fillstyle=solid,fillcolor=green](0,1.5)(-0.3,.2)(1.3,1.3)(0,1.5)
\psbezier[linewidth=1pt,fillstyle=solid,fillcolor=green](0,1.5)(-1,2.5)(-1,.5)(0,1.5)
\pscircle*(0,1.5){0.05}
\pcline[linewidth=1pt]{->}(0,0.3)(0,-0.3)
\pscircle[linewidth=1pt,linecolor=white,fillstyle=solid,fillcolor=yellow](0,-1.5){1.1}
\psbezier[linewidth=1pt,fillstyle=solid,fillcolor=green](0,-1.5)(0.3,-2.8)(-1.3,-1.7)(0,-1.5)
\psbezier[linewidth=1pt,fillstyle=solid,fillcolor=green](0,-1.5)(0.3,-.2)(-1.3,-1.3)(0,-1.5)
\psbezier[linewidth=1pt,fillstyle=solid,fillcolor=green](0,-1.5)(1,-2.5)(1,-.5)(0,-1.5)
\pscircle*(0,-1.5){0.05}
\rput(-0.4,1.5){\makebox(0,0)[cc]{$\omega_p$}}
\rput(0.25,1.85){\makebox(0,0)[cc]{$\omega_q$}}
\rput(0.25,1.15){\makebox(0,0)[cc]{$\omega_r$}}
\rput(-0.85,1.5){\makebox(0,0)[rc]{$c$}}
\rput(0.6,1){\makebox(0,0)[lc]{$a$}}
\rput(0.6,2){\makebox(0,0)[lc]{$b$}}
\rput(0.4,-1.5){\makebox(0,0)[cc]{$\omega_p$}}
\rput(-0.25,-1.85){\makebox(0,0)[cc]{$\omega_r$}}
\rput(-0.25,-1.15){\makebox(0,0)[cc]{$\omega_q$}}
\rput(0.82,-1.5){\makebox(0,0)[lc]{$c'$}}
\rput(-0.6,-1){\makebox(0,0)[rc]{$b$}}
\rput(-0.6,-2){\makebox(0,0)[rc]{$a$}}
\end{pspicture}
}
\end{center}
In all the three cases above, the
mutation law is given by Lemma~\ref{cc-prime}. In the first case, we transfer the cluster variable
from one set $\delta_k$ to another set (if the vertices $v_1$ and $v_2$ are distinct); these transformations may also
change the cyclic ordering of orbifold points inside a set $\delta_k$.
Note that the label $\omega_p$, being a term of the coefficient tuple, remains assigned to
the transformed edge.

\subsection{Fat graph description for Riemann surfaces with holes and with ${\mathbb Z}_p$ orbifold points}

In this subsection and in the rest of the paper, we use the graphs dual to the above ideal triangle decompositions
of Riemann surfaces. These graphs are especially useful when describing the Fuchsian groups
$\Delta_{g,s,r}$ of Riemann
surfaces $\Sigma_{g,s,r}$ and the corresponding geodesic functions.

\begin{df}\label{def-pend}
{\rm
We call a fat graph (a graph with the prescribed cyclic ordering of edges
entering each vertex) $\Gamma_{g,s,r}$ a {\em spine of the Riemann surface} $\Sigma_{g,s,r}$
with $g$ handles, $s>0$ holes, and $r$ orbifold points of the corresponding orders $p_i$, $i=1,\dots,r$, if
\begin{itemize}
\item[(a)] this graph can be embedded  without self-intersections in $\Sigma_{g,s,r}$;
\item[(b)] all vertices of $\Gamma_{g,s,r}$ are three-valent except exactly $r$
one-valent vertices (endpoints of ``pending'' edges), which are placed at the corresponding
orbifold points;
\item[(c)] for an orbifold point from the set $\delta_k$, the corresponding pending edge
protrudes towards the interior of the face of the graph containing the $k$th hole; the cyclic
ordering of pending edges pointing towards the interior of this face coincide with that
of orbifold points in the set $\delta_k$;
\item[(d)] upon cutting along all edges of $\Gamma_{g,s,r}$ the Riemann surface
$\Sigma_{g,s,r}$ splits into $s$ polygons each containing exactly one hole and being
simply connected upon gluing this hole.
\end{itemize}
Edges of this graph are labeled by distinct integers $\alpha=1,2,\dots,6g-6+3s+2r$, and we set
into the correspondence the real number $Z_\alpha$ to each edge.
}
\end{df}

The first homotopy groups $\pi_1(\Sigma_{g,s,r})$ and $\pi_1(\Gamma_{g,s,r})$ coincide because
each closed path in $\Sigma_{g,s,r}$ can be homotopically transformed to a closed path in $\Gamma_{g,s,r}$
(taking into account paths that go around orbifold points)
in a unique way. The standard statement in hyperbolic geometry is that conjugacy classes of elements of
a Fuchsian group $\Delta_{g,s,r}$ are in the 1-1 correspondence with homotopy
classes of closed paths in the Riemann surface $\Sigma_{g,s.r}=\HH^2_+/\Delta_{g,s,r}$ and that the
``actual'' length $\ell_\gamma$
of a hyperbolic element $\gamma\in\Delta_{g,s,r}$ coincides with the minimum length of
curves from the corresponding homotopy class; it is then the length of a unique closed
geodesic line belonging to this class.

When orbifold points are present, the Fuchsian group contains besides hyperbolic elements also elliptic
elements corresponding to rotations about these orbifold points.
The corresponding generators ${\wtd F}_i$, $i=1,\dots,r$, of the rotations through $2\pi/p_i$
are conjugates of the matrices
\be
\label{F-p}
{\wtd F}_i=U_iF_{p_i}U_i^{-1},\qquad F_p=
\left(\begin{array}{cc} 0 & 1\\ -1 & -w\end{array}\right),\quad w=2\cos{\pi/p}, \quad p\ge 2.
\ee

The real numbers $Z_\alpha$ in Definition~\ref{def-pend} are the $h$-lengths \cite{Penn1}:
they are called the {\em (Thurston) shear
coordinates} \cite{ThSh},\cite{Bon2} in the case of punctured Riemann surface (without boundary components).
We preserve this notation and this term also in the case of orbifold surfaces.
These coordinates are related to the
cross-ratio relation for two adjacent ideal triangles constituting the ideal quadrangle with
the respective vertices (in the cyclic order) $a,b,c,d$ and diagonal $bd$. At the same time, they are
related to the cluster variables $a_i$ corresponding to the sides of the corresponding quadrangle.
We have
\be
\label{four-term}
e^{Z}=-\frac{(b-c)(d-a)}{(b-a)(d-c)}=\frac{a_1a_3}{a_2a_4},
\ee
and we obtain the parameter $Z_\alpha$ choosing $\{a,b,c,d\}=\{-1,0,e^{Z_\alpha},\infty\}$.

For example, the fat graph corresponding to the pattern in Fig.~\ref{fi:disc}  is depicted in Fig.~\ref{fi:treegraph}.

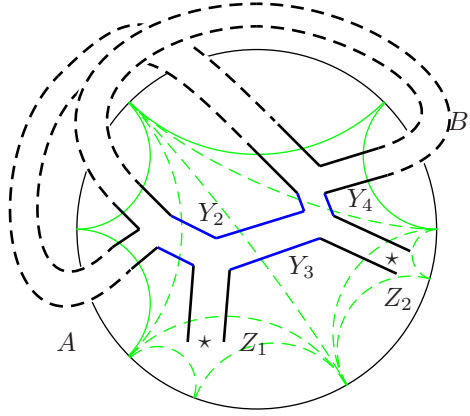
\begin{figure}[tb]
{\psset{unit=0.6}
\begin{pspicture}(-10,-5)(8,5)
\pscircle[linewidth=0.5pt](0,0){4}
\psarc[linecolor=green, linewidth=0.5pt](-4,1.64){1.64}{-90}{45}
\psarc[linecolor=green, linewidth=0.5pt](0,5.66){4}{-135}{-45}
\psarc[linecolor=green, linewidth=0.5pt](4,1.64){1.64}{135}{270}
\psarc[linestyle=dashed,linecolor=green, linewidth=0.5pt](4,-2.31){2.31}{90}{210}
\rput{142}(0,0){\psarc[linestyle=dashed, linecolor=green, linewidth=0.5pt](-4.314,0){1.616}{-68}{68}}
\rput{172}(0,0){\psarc[linestyle=dashed, linecolor=green, linewidth=0.5pt](-4.04,0){.56}{-82}{82}}
\psarc[linestyle=dashed, linecolor=green, linewidth=0.5pt](-.67,-4.99){3.06}{30}{135}
\rput{57.5}(0,0){\psarc[linestyle=dashed, linecolor=green, linewidth=0.5pt](-4.09,0){.85}{-78}{78}}
\rput{95}(0,0){\psarc[linestyle=dashed, linecolor=green, linewidth=0.5pt](-4.40,0){1.86}{-65}{65}}
\psarc[linecolor=green, linewidth=0.5pt](-4,-1.64){1.64}{-45}{90}
\psbezier[linecolor=green, linestyle=dashed, linewidth=0.5pt](-2.83,2.83)(-1.4,1.4)(2,0)(4,0)
\psbezier[linecolor=green, linestyle=dashed, linewidth=0.5pt](-2.83,2.83)(-1.4,1.4)(1,-1.732)(2,-3.464)
\rput{90}(0,0){\psarc[linecolor=green, linestyle=dashed, linewidth=0.5pt](0,5.66){4}{-135}{-45}}
\pcline[linewidth=1pt](-3.1,0.5)(-2.6,0)
\pcline[linewidth=1pt](-2.6,1)(-1.9,0.3)
\pcline[linewidth=1pt](-3.2,-0.4)(-2.6,0)
\pcline[linewidth=1pt](-2.8,-0.9)(-2.2,-0.4)
\pcline[linewidth=1pt](-1.53,-2.5)(-1.4,-0.8)
\pcline[linewidth=1pt](-.73,-2.5)(-.6,-0.9)
\pcline[linewidth=1pt](1.7,.3)(3.4,-.485)
\pcline[linewidth=1pt](1.4,-.2)(3.1,-.985)
\pcline[linewidth=1pt](1.4,1.4)(2.7,1.8)
\pcline[linewidth=1pt](1.5,.8)(2.9,1.2)
\pcline[linewidth=1pt](-0.1,1.8)(.9,.8)
\pcline[linewidth=1pt](.4,2.4)(1.4,1.4)
\pcline[linecolor=blue, linewidth=1pt](-2.2,-0.4)(-1.4,-0.8)
\pcline[linecolor=blue, linewidth=1pt](-1.9,0.3)(-.9,-0.2)
\pcline[linecolor=blue, linewidth=1pt](-.6,-0.9)(1.4,-.2)
\pcline[linecolor=blue, linewidth=1pt](1.05,.4)(-.9,-0.2)
\pcline[linecolor=blue, linewidth=1pt](1.5,.8)(1.7,.3)
\pcline[linecolor=blue, linewidth=1pt](.9,.8)(1.05,.4)
\psbezier[linecolor=white, linewidth=14pt](.25,2.1)(-2.15,4.4)(-2.7,4.7)(-4.2,3.2)
\psbezier[linecolor=white, linewidth=14pt](-4.2,3.2)(-5.8,1.6)(-5.4,-3.05)(-3,-0.65)
\psbezier[linestyle=dashed, linewidth=1pt](-.1,1.8)(-2.1,3.8)(-2.7,4.3)(-4,3)
\psbezier[linestyle=dashed, linewidth=1pt](-4,3)(-5.3,1.7)(-5.2,-2.4)(-3.2,-0.4)
\psbezier[linestyle=dashed, linewidth=1pt](.4,2.4)(-2.2,5)(-2.7,5.1)(-4.4,3.4)
\psbezier[linestyle=dashed, linewidth=1pt](-4.4,3.4)(-6.3,1.5)(-5.6,-3.7)(-2.8,-0.9)
\psbezier[linecolor=white, linewidth=14pt](2.8,1.5)(5.7,2.4)(2.5,4.75)(0,4.75)
\psbezier[linecolor=white, linewidth=14pt](0,4.75)(-2.5,4.75)(-5,2.9)(-2.85,.75)
\psbezier[linestyle=dashed, linewidth=1pt](2.7,1.8)(5.3,2.6)(2,4.5)(0,4.5)
\psbezier[linestyle=dashed, linewidth=1pt](0,4.5)(-2,4.5)(-4.6,3)(-2.6,1)
\psbezier[linestyle=dashed, linewidth=1pt](2.9,1.2)(6.15,2.2)(3,5)(0,5)
\psbezier[linestyle=dashed, linewidth=1pt](0,5)(-3,5)(-5.4,2.8)(-3.1,0.5)
\rput(3,-0.65){\makebox(0,0){$\star$}}
\rput(-1.15,-2.5){\makebox(0,0){$\star$}}
\rput(-1,0.1){\makebox(0,0)[cb]{$Y_2$}}
\rput(1,-0.6){\makebox(0,0)[ct]{$Y_3$}}
\rput(2.3,0.4){\makebox(0,0)[cb]{$Y_4$}}
\rput(-4,-2.3){\makebox(0,0)[rt]{$A$}}
\rput(4.7,2.6){\makebox(0,0)[rt]{$B$}}
\rput(2.8,-1.3){\makebox(0,0)[lt]{$Z_2$}}
\rput(-0.4,-2.3){\makebox(0,0)[lt]{$Z_1$}}
\end{pspicture}
}
\caption{\small
The fat graph corresponding to the ideal triangle partition of the fundamental domain in Fig.~\ref{fi:disc}.
The real numbers $Z_1$ and $Z_2$ are associated to the pending edges and $A$, $B$, and
$Y_2$, $Y_3$, and $Y_4$ to the inner edges.}
\label{fi:treegraph}
\end{figure}

\subsection{The Fuchsian group $\Delta_{g,s,r}$ and geodesic functions}\label{ss:geodesic}

We now describe combinatorially the conjugacy classes of the Fuchsian group $\Delta_{g,s,r}$.
Every time the path homeomorphic to a (closed) geodesic $\gamma$ goes through the edge with the label $\alpha$ we
insert~\cite{Fock1} the matrix of the M\"obius transformation
\be
\label{XZ} X_{Z_\alpha}=\left(
\begin{array}{cc} 0 & -\e^{Z_\alpha/2}\\
                \e^{-Z_\alpha/2} & 0\end{array}\right).
\ee

We also have the ``right'' and ``left'' turn matrices
to be set in proper places when a path makes the corresponding turns at three-valent vertices,
\be
\label{R}
R=\left(\begin{array}{cc} 1 & 1\\ -1 & 0\end{array}\right), \qquad
L= R^2=\left(\begin{array}{cc} 0 & 1\\ -1 &
-1\end{array}\right).
\ee

New elements of the
Fuchsian group correspond to rotations of geodesic lines when going around orbifold points
indicated by star-vertices;
for a ${\mathbb Z}_p$ orbifold point we then insert the matrix $F_p$ (\ref{F-p})
into the corresponding string of $2\times2$-matrices (when we go around the orbifold point counterclockwise
as in Fig.~\ref{fi:corner}(a)). When the order of the orbifold point is larger than two, we can go around it
$k$ times; {\em we then have to insert the matrix $(-1)^{k+1}F_p^k$
into the product of $2\times2$-matrices}.
For example, parts of geodesic functions
in the three cases in Fig.~\ref{fi:corner} read
\be
\label{XZFXZ}
\begin{array}{ll}
\hbox{(a)}\quad & \dots  X_XLX_ZF_pX_ZLX_Y\dots, \\
\hbox{(b)}\quad & \dots  X_XLX_Z(-F^2_p)X_ZRX_X\dots, \\
\hbox{(c)}\quad & \dots  X_YRX_Z(F^3_p)X_ZLX_Y\dots. \\
\end{array}
\ee
Note that $F_p^p=(-1)^{p-1}{\mathbb E}$,
so going around the ${\mathbb Z}_p$ orbifold point $p$ times merely corresponds to avoiding this orbifold point
due to the simple equality (note that $X_S^2=-{\mathbb E}$ and $L^2=-R$)
$$
X_XLX_Z(-1)^{p-1}F_p^pX_ZLX_Y=X_XLX_Z^2LX_Y=-X_XL^2X_Y=X_XRX_Y.
$$
(For the ${\mathbb Z}_2$ orbifold points
this pattern was first proposed by Fock and Goncharov \cite{FG}; the graph morphisms were described in \cite{Ch2}.)

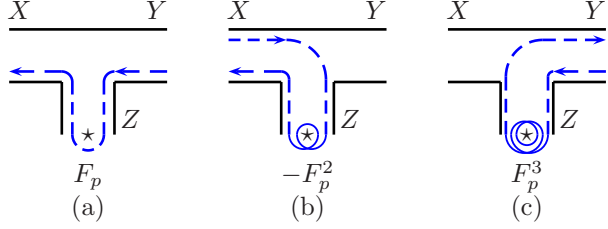
\begin{figure}[tb]
{\psset{unit=0.7}
\begin{pspicture}(-2,-3)(2,2)
\pcline[linewidth=1pt](-1.5,1)(1.5,1)
\pcline[linewidth=1pt](-1.5,0)(-0.5,0)
\pcline[linewidth=1pt](1.5,0)(0.5,0)
\pcline[linewidth=1pt](-.5,0)(-.5,-1)
\pcline[linewidth=1pt](.5,0)(.5,-1)
\rput(0,-1){\makebox(0,0){$\star$}}
\pcline[linecolor=blue, linestyle=dashed, linewidth=1pt]{<-}(-1.5,0.2)(-.5,0.2)
\psarc[linecolor=blue, linestyle=dashed, linewidth=1pt](-.5,0){.2}{0}{90}
\pcline[linecolor=blue, linestyle=dashed, linewidth=1pt](-.3,0)(-.3,-1)
\psarc[linecolor=blue, linestyle=dashed, linewidth=1pt](0,-1){.3}{-180}{0}
\pcline[linecolor=blue, linestyle=dashed, linewidth=1pt](.3,0)(.3,-1)
\psarc[linecolor=blue, linestyle=dashed, linewidth=1pt](.5,0){.2}{90}{180}
\pcline[linecolor=blue, linestyle=dashed, linewidth=1pt]{<-}(.5,0.2)(1.5,0.2)
\rput(-1.3,1.2){\makebox(0,0)[cb]{$X$}}
\rput(1.3,1.2){\makebox(0,0)[cb]{$Y$}}
\rput(0.8,-.7){\makebox(0,0){$Z$}}
\rput(0,-1.8){\makebox(0,0){$F_p$}}
\rput(0,-2.4){\makebox(0,0){(a)}}
\end{pspicture}
\begin{pspicture}(-2,-3)(2,2)
\pcline[linewidth=1pt](-1.5,1)(1.5,1)
\pcline[linewidth=1pt](-1.5,0)(-0.5,0)
\pcline[linewidth=1pt](1.5,0)(0.5,0)
\pcline[linewidth=1pt](-.5,0)(-.5,-1)
\pcline[linewidth=1pt](.5,0)(.5,-1)
\rput(0,-1){\makebox(0,0){$\star$}}
\pcline[linecolor=blue, linestyle=dashed, linewidth=1pt]{<-}(-1.5,0.2)(-.5,0.2)
\psarc[linecolor=blue, linestyle=dashed, linewidth=1pt](-.5,0.05){.15}{0}{90}
\pcline[linecolor=blue, linestyle=dashed, linewidth=1pt](-.35,0.05)(-.35,-1)
\psarc[linecolor=blue, linewidth=0.7pt](0.075,-1){0.275}{-180}{0}
\psarc[linecolor=blue, linewidth=0.7pt](-0.075,-1){0.275}{-180}{0}
\psarc[linecolor=blue, linewidth=0.7pt](0,-1){0.2}{0}{180}
\pcline[linecolor=blue, linestyle=dashed, linewidth=1pt](.35,0.05)(.35,-1)
\psarc[linecolor=blue, linestyle=dashed, linewidth=1pt](-.4,0.05){.75}{0}{90}
\pcline[linecolor=blue, linestyle=dashed, linewidth=1pt]{->}(-1.5,0.8)(-.4,0.8)
\rput(-1.3,1.2){\makebox(0,0)[cb]{$X$}}
\rput(1.3,1.2){\makebox(0,0)[cb]{$Y$}}
\rput(0.8,-.7){\makebox(0,0){$Z$}}
\rput(0,-1.8){\makebox(0,0){$-F^2_p$}}
\rput(0,-2.4){\makebox(0,0){(b)}}
\end{pspicture}
\begin{pspicture}(-2,-3)(2,2)
\pcline[linewidth=1pt](-1.5,1)(1.5,1)
\pcline[linewidth=1pt](-1.5,0)(-0.5,0)
\pcline[linewidth=1pt](1.5,0)(0.5,0)
\pcline[linewidth=1pt](-.5,0)(-.5,-1)
\pcline[linewidth=1pt](.5,0)(.5,-1)
\rput(0,-1){\makebox(0,0){$\star$}}
\pcline[linecolor=blue, linestyle=dashed, linewidth=1pt]{->}(1.5,0.2)(.5,0.2)
\psarc[linecolor=blue, linestyle=dashed, linewidth=1pt](.5,0.1){.1}{90}{180}
\pcline[linecolor=blue, linestyle=dashed, linewidth=1pt](-.4,0.1)(-.4,-1)
\psarc[linecolor=blue, linewidth=0.7pt](0.05,-1){.35}{-180}{0}
\psarc[linecolor=blue, linewidth=0.7pt](-0.05,-1){.35}{-180}{0}
\psarc[linecolor=blue, linewidth=0.7pt](0,-1){.2}{-180}{0}
\psarc[linecolor=blue, linewidth=0.7pt](0.05,-1){.25}{0}{180}
\psarc[linecolor=blue, linewidth=0.7pt](-0.05,-1){.25}{0}{180}
\pcline[linecolor=blue, linestyle=dashed, linewidth=1pt](.4,0.1)(.4,-1)
\psarc[linecolor=blue, linestyle=dashed, linewidth=1pt](.3,0.1){.7}{90}{180}
\pcline[linecolor=blue, linestyle=dashed, linewidth=1pt]{<-}(1.5,0.8)(.3,0.8)
\rput(-1.3,1.2){\makebox(0,0)[cb]{$X$}}
\rput(1.3,1.2){\makebox(0,0)[cb]{$Y$}}
\rput(0.8,-.7){\makebox(0,0){$Z$}}
\rput(0,-1.8){\makebox(0,0){$F^3_p$}}
\rput(0,-2.4){\makebox(0,0){(c)}}
\end{pspicture}
}
\caption{\small Part of a graph with a pending edge.
Its endpoint with the orbifold point is directed toward the interior
of the boundary component this point is associated with.
The variable $Z$ corresponds to the respective pending edge. We present
four typical examples of geodesics undergoing single (a), double (b),
and triple (c) rotations at the ${\mathbb Z}_p$ orbifold point.}
\label{fi:corner}
\end{figure}

An element of a Fuchsian group has then the typical structure
\be
\label{Pgamma}
P_{\gamma}=LX_{Y_n}RX_{Y_{n-1}}\cdots RX_{Y_2}LX_{Z_1}(-1)^{k+1}F^k_p X_{Z_1}RX_{Y_1},
\ee
where $Y_i$ are variables of ``internal'' edges and $Z_j$ are those of pending edges.
The corresponding {\em geodesic function}
\be
\label{G}
G_{\gamma}\equiv \tr P_\gamma=2\cosh(\ell_\gamma/2)
\ee
is expressed via the actual length $\ell_\gamma$ of the closed
geodesic on the Riemann surface.

\begin{remark}\label{rm-positivity}
Note that the combinations
$$
RX_y=\left(\begin{array}{cc} e^{-Y/2} & -e^{Y/2}\\ 0 & e^{Y/2}\end{array}\right)\quad \hbox{and}\quad
LX_y=\left(\begin{array}{cc} e^{-Y/2} & 0 \\ -e^{-Y/2} & e^{Y/2}\end{array}\right)
$$
as well as products of any number of these matrices have the sign structure
$\left(\begin{array}{cc} + & -\\ - & +\end{array}\right)$, so the trace of any of $P_\gamma$ in the absence of
orbifold points is a
sum of exponentials with positive integer coefficients; this sum always include the terms
$e^{Y_1/2+\cdots+Y_n/2}$ and $e^{-Y_1/2-\cdots-Y_n/2}$ being therefore always greater or equal
two thus describing a hyperbolic or parabolic element; the latter is possible only if
$Y_1+\cdots+Y_n=0$ and only if the turn matrices in (\ref{Pgamma}) are all either $R$ or $L$,
which corresponds to a path going along the boundary of a face; all such paths are homeomorphic
to the hole boundaries, and the condition that the sum of $Y_i$ equals zero indicates the degeneration
of a hole into a puncture.
\end{remark}

The group generated by elements (\ref{F-p}) together with translations along $A$- and $B$-cycles
and around holes not necessarily produces a regular (metrizable) surface because its action
is not necessarily discrete. We formulate the necessary and
sufficient conditions for producing a {\em regular} surface
in terms of graphs (see \cite{Ch1a} for the ${\mathbb Z}_2$ orbifold point
case).\footnote{In what follows, we call a Riemann surface
regular if it is locally a smooth
constant-curvature surface everywhere except exactly $r$ orbifold points.}

To formulate the regularity condition, we
interpret passages around orbifold points as paths in the $p$-fold covering of the geodesic neighborhood in Fig.~\ref{fi:saucer-pan}.
For this, we take the subgraph in Fig.~\ref{fi:p-gone-dual} dual to the ideal triangle decomposition in
Fig.~\ref{fi:p-gone}. When splitting the equilateral $p$-gone into ideal triangles
we break the $p$-fold symmetry, so now the shear coordinates
$Z_i$, $i=1,\dots,p$, on the fat graph edges dual to the corresponding
$p$-gone sides and $Y_j$, $j=2,\dots,p-2$, on the edges dual to the diagonals of the $p$-gone are different.

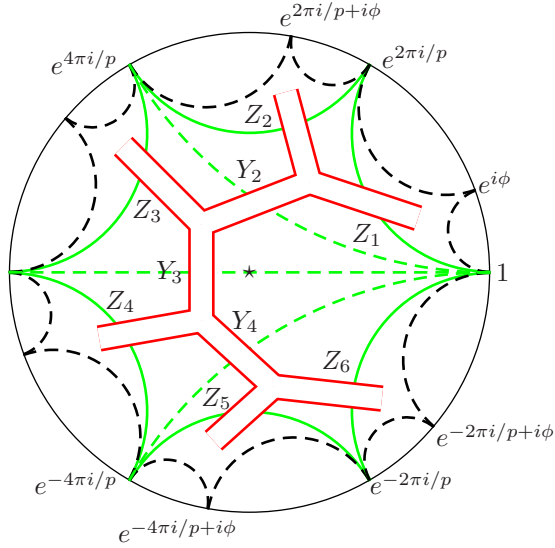
\begin{figure}[tb]
{\psset{unit=0.8}
\begin{pspicture}(5,-5)(-5,5)
\pscircle[linewidth=0.5pt](0,0){4}
\newcommand{\PATTERN}{%
\psarc[linecolor=green, linewidth=1pt](4.62,0){2.3}{120}{240}
\rput{-20}(0,0){\psarc[linestyle=dashed, linewidth=1pt](4.06,0){.7}{100}{260}}
\rput{10}(0,0){\psarc[linestyle=dashed, linewidth=1pt](4.26,0){1.45}{110}{250}}
}
\rput{30}(0,0){\PATTERN}
\rput{90}(0,0){\PATTERN}
\rput{150}(0,0){\PATTERN}
\rput{210}(0,0){\PATTERN}
\rput{270}(0,0){\PATTERN}
\rput{330}(0,0){\PATTERN}
\rput{60}(0,0){
\psarc[linestyle=dashed,linecolor=green, linewidth=1pt](8,0){6.93}{150}{210}
}
\rput{-60}(0,0){
\psarc[linestyle=dashed,linecolor=green, linewidth=1pt](8,0){6.93}{150}{210}
}
\pcline[linestyle=dashed,linecolor=green, linewidth=1pt](-4,0)(4,0)
\rput(0,0){\makebox(0,0){$\star$}}
\pcline[linecolor=red, linewidth=10pt](-0.8,0.8)(-0.8,-0.8)
\pcline[linecolor=red, linewidth=10pt](-0.8,-0.8)(-2.5,-1.1)
\pcline[linecolor=red, linewidth=10pt](-0.8,0.8)(-2.1,2.1)
\pcline[linecolor=red, linewidth=10pt](-0.8,0.8)(1,1.5)
\pcline[linecolor=red, linewidth=10pt](-0.8,-0.8)(0.4,-1.9)
\pcline[linecolor=red, linewidth=10pt](1,1.5)(0.6,3)
\pcline[linecolor=red, linewidth=10pt](1,1.5)(2.8,0.9)
\pcline[linecolor=red, linewidth=10pt](0.4,-1.9)(-0.6,-2.8)
\pcline[linecolor=red, linewidth=10pt](0.4,-1.9)(2.2,-2.1)
\pcline[linecolor=white, linewidth=8pt](-0.8,0.8)(-0.8,-0.8)
\pcline[linecolor=white, linewidth=8pt](-0.8,-0.8)(-2.5,-1.1)
\pcline[linecolor=white, linewidth=8pt](-0.8,0.8)(-2.1,2.1)
\pcline[linecolor=white, linewidth=8pt](-0.8,0.8)(1,1.5)
\pcline[linecolor=white, linewidth=8pt](-0.8,-0.8)(0.4,-1.9)
\pcline[linecolor=white, linewidth=8pt](1,1.5)(0.6,3)
\pcline[linecolor=white, linewidth=8pt](1,1.5)(2.8,0.9)
\pcline[linecolor=white, linewidth=8pt](0.4,-1.9)(-0.6,-2.8)
\pcline[linecolor=white, linewidth=8pt](0.4,-1.9)(2.2,-2.1)
\rput(4.1,0){\makebox(0,0)[lc]{$1$}}
\rput(3.8,1.5){\makebox(0,0)[lc]{$e^{i\phi}$}}
\rput(2.2,3.4){\makebox(0,0)[lb]{$e^{2\pi i/p}$}}
\rput(0.5,4.1){\makebox(0,0)[lb]{$e^{2\pi i/p+i\phi}$}}
\rput(-2.2,3.3){\makebox(0,0)[rb]{$e^{4\pi i/p}$}}
\rput(3.1,-2.5){\makebox(0,0)[lt]{$e^{-2\pi i/p+i\phi}$}}
\rput(2,-3.3){\makebox(0,0)[lt]{$e^{-2\pi i/p}$}}
\rput(-0.2,-4.1){\makebox(0,0)[rt]{$e^{-4\pi i/p+i\phi}$}}
\rput(-2.2,-3.3){\makebox(0,0)[rt]{$e^{-4\pi i/p}$}}
\rput(0,1.5){\makebox(0,0)[cb]{$Y_2$}}
\rput(-1.1,0){\makebox(0,0)[rc]{$Y_3$}}
\rput(-0.3,-1){\makebox(0,0)[lb]{$Y_4$}}
\rput(2.2,.8){\makebox(0,0)[rt]{$Z_1$}}
\rput(-0.1,2.6){\makebox(0,0)[lc]{$Z_2$}}
\rput(-1.7,1.2){\makebox(0,0)[ct]{$Z_3$}}
\rput(-1.9,-0.3){\makebox(0,0)[rt]{$Z_4$}}
\rput(-0.3,-2.1){\makebox(0,0)[rc]{$Z_5$}}
\rput(1.2,-1.3){\makebox(0,0)[lt]{$Z_6$}}
\end{pspicture}
}
\caption{\small
The (tree-like) subgraph dual to the ideal triangle decomposition of the ideal equilateral $p$-gone in
Fig.~\ref{fi:p-gone}. All the variables $Z_\alpha$ and $Y_\beta$ are determined by the cross-ratio relations
in the corresponding ideal quadrangles based on the points from the sets $\{e^{2\pi i k/p}\}$ and $\{e^{2\pi i k/p+i\phi}\}$,
$k=0,\dots,p-1$.}
\label{fi:p-gone-dual}
\end{figure}

We first identify the parameter $Z$ in (\ref{XZFXZ}) to be
\be
\label{eZ}
e^Z=\frac{\sin\left(\pi/p-\phi/2\right)}{\sin(\phi/2)},
\ee
We can then derive the explicit relations between the parameter $Z$ in (\ref{eZ}) and the
variables $Z_i$ and $Y_j$ determined by the standard cross-ratio relations (\ref{four-term}).
The vertices of the ideal $p$-gone
are situated at the points $e^{i2\pi k/p}$, $k=0,\dots,p-1$, and $p$ copies of the vertex
of an additional ideal triangle adjacent to the $p$-gone side are $e^{i\phi+i2\pi ik/p}$, $k=0,\dots,p-1$.

Using the cross-ratio relations to calculate $Z_i$ and $Y_j$ we find the exact relations between these variables and
the variable $Z$ given by (\ref{eZ}):
\bea
e^{Z_1}&=&e^Z\frac{\sin(2\pi/p)}{\sin(\pi/p)},\nonumber\\
e^{Z_p}&=&e^Z\frac{\sin(\pi/p)}{\sin(2\pi/p)},\nonumber\\
e^{Z_k}&=&e^Z\frac{\sin\bigl((k-1)\pi/p\bigr)}{\sin(k\pi/p)},\quad k=2,\dots,p-1,\label{ZK}\\
e^{Y_k}&=&\frac{\sin\bigl((k+1)\pi/p\bigr)}{\sin\bigl((k-1)\pi/p\bigr)}, \quad k=2,\dots,p-2.\label{YK}
\eea
The following $2\times 2$-matrix equalities can be verified directly:
\be
\label{equivalence}
X_ZF_pX_Z=X_{Z_1}LX_{Z_2}=X_{Z_{k}}LX_{Y_{k}}LX_{Z_{k+1}}=X_{Z_{p-1}}LX_{Z_p}, \ k=2,\dots,p-2.
\ee
We then have the following lemma.

\begin{lm}\label{lem-length}
We have the following explicit $2\times 2$-matrix relations for the shear variables in the
equilateral $p$-gone in Fig.~\ref{fi:p-gone-dual} given by (\ref{ZK}) and (\ref{YK}):
\bea
X_ZF_pX_Z&=&X_{Z_1}LX_{Z_2}\nonumber\\
X_Z(-F_p^2)X_Z&=&X_{Z_1}RX_{Y_2}LX_{Z_3}\nonumber\\
\vdots&{}&\label{P-gone->pending}\\
X_Z(-1)^{k-1}F_p^{k}X_Z&=&X_{Z_1}RX_{Y_2}R\cdots RX_{Y_k}LX_{Z_{k+1}}, \ k=2,\dots,p-2\nonumber\\
X_Z(-1)^{p}F_p^{p-1}X_Z&=&X_{Z_1}R X_{Y_{2}}R\cdots RX_{Y_{p-2}}R X_{Z_p}. \nonumber
\eea
\end{lm}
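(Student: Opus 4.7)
The plan is to prove all the identities by induction on $k$, leveraging the multiple equivalent factorizations of the single matrix $X_ZF_pX_Z$ supplied by (\ref{equivalence}). The base case $k=1$ is literally the first identity in (\ref{equivalence}), namely $X_ZF_pX_Z = X_{Z_1}LX_{Z_2}$.

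The algebraic key is the fact that $X_Z^2 = -{\mathbb E}$, which gives the telescoping identity $X_ZF_p^{k+1}X_Z = X_ZF_p^k(-X_ZX_Z)F_pX_Z = -(X_ZF_p^kX_Z)(X_ZF_pX_Z)$. Setting $A_k := X_Z(-1)^{k-1}F_p^kX_Z$ and tracking the signs, this reduces to the clean recursion $A_{k+1} = A_k\cdot(X_ZF_pX_Z)$, which is what I would iterate.

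For the inductive step ($k\to k+1$ with $k+1\le p-2$), I would multiply the inductive hypothesis $A_k = X_{Z_1}RX_{Y_2}R\cdots RX_{Y_k}LX_{Z_{k+1}}$ on the right by the factorization $X_ZF_pX_Z = X_{Z_{k+1}}LX_{Y_{k+1}}LX_{Z_{k+2}}$ from (\ref{equivalence}), chosen precisely so that its leading factor $X_{Z_{k+1}}$ meets the trailing $LX_{Z_{k+1}}$ of the hypothesis. The central block $LX_{Z_{k+1}}X_{Z_{k+1}}L$ collapses via $X_{Z_{k+1}}^2=-{\mathbb E}$ and then $L^2=-R$ to a single $R$, with signs cancelling, yielding $A_{k+1}=X_{Z_1}RX_{Y_2}R\cdots RX_{Y_{k+1}}LX_{Z_{k+2}}$ as required.

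The terminal identity at $k=p-1$ is handled by the same calculation, except that I would use the rightmost factorization $X_ZF_pX_Z = X_{Z_{p-1}}LX_{Z_p}$ of (\ref{equivalence}) in place of the middle one; the analogous collapse $LX_{Z_{p-1}}\cdot X_{Z_{p-1}}L = R$ produces $A_{p-1} = X_{Z_1}RX_{Y_2}R\cdots RX_{Y_{p-2}}RX_{Z_p}$, which is the last line of the claim. There is no genuine obstacle beyond careful sign bookkeeping; the substantive content is simply the observation that the several factorizations of the fixed matrix $X_ZF_pX_Z$ in (\ref{equivalence}) — themselves a consequence of the explicit formulas (\ref{ZK})–(\ref{YK}) — provide exactly the flexibility needed to chain successive inductive steps so that the intermediate variables $Z_{k+1}$ cancel pairwise.
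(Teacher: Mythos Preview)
Your proof is correct and follows essentially the same route as the paper's own argument: both use the multiple factorizations of $X_ZF_pX_Z$ in (\ref{equivalence}) together with $X_S^2=-{\mathbb E}$ and $L^2=-R$ to telescope the product, with the paper illustrating this on the case $k=2$ and then saying ``all other equalities are obtained if we continue this chain of multiplications.'' Your write-up merely makes the induction and sign bookkeeping explicit, which is a presentational rather than a mathematical difference.
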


The {\it proof} uses equalities from (\ref{equivalence}) for constructing the longer chain using that
$X_SX_S=-{\mathbb E}$ for any variable $S$ and that $L^2=-R$. For example, we obtain the r.h.s. of the second equality
in (\ref{P-gone->pending}) multiplying $X_{Z_1}LX_{Z_2}\cdot X_{Z_{2}}LX_{Y_{2}}LX_{Z_{3}}$ whereas the l.h.s.
merely becomes $X_ZF_pX_Z\cdot X_ZF_pX_Z=X_Z(-1)F_p^2 X_Z$. All other equalities are obtained if we continue this chain
of multiplications.

Due to Lemma~\ref{lem-length}, all ``rotations'' about orbifold points $X_Z(-1)^{k+1}F_p^k X_Z$ are now presented as the standard products of
matrices $X_S$ (with real $S$) alternated with the matrices of left and right turns (\ref{R}), which means that the conclusion of
Remark~\ref{rm-positivity} remains valid in this case as well: as soon as in the original spine $\Gamma_{g,s,r}$
all the parameters $Z_\alpha$ of pending edges are real, all the geodesic functions constructed on a Riemann surface with orbifold points are
Laurent polynomials with {\em positive integer coefficients}
of the ``new'' real variables $Z^{(\alpha,p)}_i$, $Y^{(p)}_j$ (where the superscripts $\alpha,p$ indicate that these variables are completely
determined by the original variable $Z_\alpha$ and the order $p$ of the orbifold point)
and ``old'' variables of ``internal'' edges of the spine $\Gamma_{g,s,r}$.
corresponding to usual partitions into ideal triangles.
So, again, in the trace of every product of form (\ref{Pgamma})
we necessarily have the term $2\cosh (\sum_{\beta=1}^n
X_\beta)$, where the sum ranges all edges (new and internal ones) the corresponding path goes through, and we let $X$ denote
the variables of all these edges disregarding their origins.
Every such trace is therefore a positive number greater or
equal two, and the corresponding element of the group will be either
hyperbolic or parabolic (the latter is possible only for geodesics around holes and only if a hole
reduces to a puncture). The only elliptic elements are precisely conjugates of $F_p^k$. We therefore come to the theorem

\begin{theorem}
\label{lem-metric}
We have a metrizable Riemann surface for {\em any} choice of real numbers
$Z_\alpha$ associated to the edges of an original spine $\Gamma_{g,s,r}$.
The converse statement is also true: for
{\em any} metrizable Riemann surface $\Sigma_{g,s,r}$ we have a spine $\Gamma_{g,s,r}$
with real numbers associated to its edges such that the lengths of geodesics on $\Sigma_{g,s,r}$
are given by traces of products (\ref{Pgamma}) corresponding to paths in the spine.
\end{theorem}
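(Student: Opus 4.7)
The overall plan is to reduce both directions of the statement to the orbifold-free case by means of Lemma~\ref{lem-length}, and then to invoke the sign structure of Remark~\ref{rm-positivity}.

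For the forward direction, I start from arbitrary real parameters $Z_\alpha$ on the edges of $\Gamma_{g,s,r}$. For each pending edge carrying $Z_\alpha$ and attached to a ${\mathbb Z}_p$ orbifold point, I locally replace it by the $p$-gone dual subgraph of Fig.~\ref{fi:p-gone-dual}, equipped with the real shear coordinates $Z_i^{(\alpha,p)}$ and $Y_j^{(p)}$ determined from $Z_\alpha$ via (\ref{ZK})--(\ref{YK}). By Lemma~\ref{lem-length}, every rotation block $X_{Z_\alpha}(-1)^{k+1}F_p^k X_{Z_\alpha}$ appearing in (\ref{Pgamma}) can be rewritten as a product of matrices $X_S$ (with $S$ real) interspersed with turn matrices $L,R$ in the enlarged graph. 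Consequently, each element $P_\gamma\in\Delta_{g,s,r}$ that is not conjugate to some $F_p^k$ acquires the sign pattern analyzed in Remark~\ref{rm-positivity}; its trace becomes a sum of exponentials with nonnegative coefficients containing both $e^{\pm \sum_\beta X_\beta/2}$, so that $\tr P_\gamma\ge 2$. Thus every such element is hyperbolic or parabolic (the latter only when $\sum_\beta X_\beta=0$ along a boundary path, i.e.\ the hole degenerates to a puncture), which forces $\Delta_{g,s,r}$ to act properly discontinuously on $\HH^2$. The quotient is then a regular Riemann surface with exactly the prescribed orbifold points as elliptic fixed points, inheriting the hyperbolic metric away from those points.

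For the converse, given a metrizable $\Sigma_{g,s,r}$ I construct the ideal triangulation of Section~\ref{ss:clusters} by removing hyperbolic collars around holes and geodesic neighborhoods of orbifold points as in Fig.~\ref{fi:saucer-pan}, triangulating what remains by ideal triangles, and taking the dual fat graph of Definition~\ref{def-pend}. Internal edges inherit shear coordinates via the cross-ratio (\ref{four-term}); to each pending edge I assign the value $Z_\alpha$ read off the $p$-fold cover of the orbifold neighborhood using (\ref{eZ}). Lemma~\ref{lem-length} then matches the rotation holonomy $(-1)^{k+1}F_p^k$ of a loop around the orbifold point with the product of $X_S$-matrices on the cover, so that formula (\ref{G}) recovers $2\cosh(\ell_\gamma/2)$ for every closed geodesic on $\Sigma_{g,s,r}$.

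The main obstacle is precisely the presence of the rotation blocks $X_{Z_\alpha}(-1)^{k+1}F_p^k X_{Z_\alpha}$: the matrix $F_p$ has a sign pattern incompatible with Remark~\ref{rm-positivity}, and a direct trace estimate on (\ref{Pgamma}) would fail to separate hyperbolic from elliptic elements. Lemma~\ref{lem-length} is what makes the argument work, by trading each such block for a positive-sign product in the enlarged graph; once this substitution is in place the proof proceeds uniformly, in analogy with the classical orbifold-free case. A secondary technical point I would verify is that the only elliptic elements of $\Delta_{g,s,r}$ are the prescribed conjugates of $F_p^k$: this follows from the same positivity argument, since any element not conjugate to such a rotation has trace at least $2$ and hence cannot be elliptic.
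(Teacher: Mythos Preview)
Your proposal is correct and follows essentially the same approach as the paper. The paper's argument for the forward direction is precisely the reduction via Lemma~\ref{lem-length} to the orbifold-free positivity estimate of Remark~\ref{rm-positivity}, yielding $\tr P_\gamma\ge 2$ for every element not conjugate to some $F_p^k$; for the converse, the paper likewise invokes the ideal-triangle decomposition of Sec.~\ref{ss:clusters} (referring to \cite{Ch2} for the $\mathbb Z_2$ case and asserting the straightforward generalization), which matches your construction of the spine and the assignment of $Z_\alpha$ via~(\ref{eZ}).
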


The {\em proof} of the second statement was performed in \cite{Ch2} for ${\mathbb Z}_2$-orbifold points. It
is based on the (obvious) existence of the ideal triangle decomposition described in Sec.~\ref{ss:clusters} for any metrizable
Riemann surface and can be straightforwardly generalized to the case of
orbifold points of any type. We have therefore parameterized all possible regular surfaces in terms of the
$(6g-6+3s+2r)$-tuple of real coordinates $\{Z_\alpha\}$.

\begin{corollary}
The decorated Teichm\"uller space ${\mathfrak T}^{H}_{g,s,r}$ of Riemann surfaces
with holes and orbifold points
is the space $\RR^{6g-6+3s+2r}$ of real parameters
on the edges of a spine $\Gamma_{g,s,r}$.
\end{corollary}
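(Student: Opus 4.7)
The corollary is essentially an immediate consequence of Theorem~\ref{lem-metric}, which already supplies the bijection between metrizable surfaces $\Sigma_{g,s,r}$ and real assignments $\{Z_\alpha\}$ to the edges of a spine $\Gamma_{g,s,r}$. So the only substantive task left is to verify that the number of edges of a spine equals $6g-6+3s+2r$; everything else is bookkeeping on what Theorem~\ref{lem-metric} has already produced.

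The plan is to obtain the edge count by an Euler characteristic argument. Let $V_3$ denote the number of trivalent vertices, $V_1=r$ the number of pending (univalent) vertices at the orbifold points, $E$ the total number of edges, and $F$ the number of faces of the embedding of $\Gamma_{g,s,r}$ into the closed surface obtained from $\Sigma_{g,s,r}$ by filling in the holes. Condition (d) in Definition~\ref{def-pend} forces $F=s$ (one face per hole), and the closed surface has Euler characteristic $2-2g$. Writing $V=V_3+r$ and using the valence identity $2E=3V_3+r$ to eliminate $V_3$, Euler's formula $V-E+F=2-2g$ becomes
\begin{equation*}
\frac{2E-r}{3}+r-E+s=2-2g,
\end{equation*}
which rearranges to $E=6g-6+3s+2r$. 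As a sanity check one may verify this against the example $\Sigma_{1,1,2}$ in Figure~\ref{fi:treegraph}, whose spine indeed carries seven edges $A,B,Y_2,Y_3,Y_4,Z_1,Z_2$, matching $6\cdot 1-6+3\cdot 1+2\cdot 2=7$.

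Combining this count with Theorem~\ref{lem-metric} gives the result: the map $\{Z_\alpha\}_{\alpha=1}^{6g-6+3s+2r}\mapsto\Sigma_{g,s,r}$ is surjective onto the set of regular (up to the prescribed $r$ orbifold points) Riemann surfaces with $s$ holes, and is injective because the shear coordinates are determined by cross-ratios of the vertices of the ideal triangulation that is dual to the spine, hence by the hyperbolic structure itself (the decoration by horocycles being already absorbed into the definition of $Z_\alpha$ via (\ref{four-term})). Therefore $\mathfrak{T}^H_{g,s,r}\cong\RR^{6g-6+3s+2r}$.

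The only place where care is needed is the interpretation of ``decorated Teichm\"uller space'': one must check that varying the $Z_\alpha$ independently across all of $\RR^{6g-6+3s+2r}$ is genuinely unobstructed, i.e.\ that no linear relation on the $Z_\alpha$ (of the type that appears for punctured surfaces via the perimeter sums around cusps) is being imposed here. This is guaranteed by Remark~\ref{rm-positivity}: the only way a hyperbolic element could degenerate to a parabolic one is by the vanishing of a perimeter sum, which corresponds to a hole degenerating to a puncture — an event that does not occur generically in $\mathfrak{T}^H_{g,s,r}$, so no constraints are imposed and the parameter space is all of $\RR^{6g-6+3s+2r}$.
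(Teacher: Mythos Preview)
Your proposal is correct and follows the paper's approach: the corollary is stated there as an immediate consequence of Theorem~\ref{lem-metric}, with no separate proof given. Your Euler-characteristic computation of the edge count $E=6g-6+3s+2r$ is a welcome addition, since the paper simply asserts this number in Definition~\ref{def-pend} without justification; your check against the $\Sigma_{1,1,2}$ example is also apt.

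One minor remark: your final paragraph on possible constraints is a bit overcautious. In this setting the perimeter sums $\sum Z_\alpha$ around faces are the Casimirs of the Poisson bracket (Proposition~\ref{prop12}) and are \emph{not} constrained to vanish; they are genuine free coordinates on $\mathfrak{T}^H_{g,s,r}$, with the locus where one vanishes corresponding to a hole degenerating to a puncture. So there is nothing to check there beyond what Theorem~\ref{lem-metric} already gives, and the paragraph could be omitted without loss.
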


\section{Mapping class group transformations}\label{s:mcg}

\subsection{Poisson structure}\label{ss:Poisson}

One of the most attractive properties of the graph description is a very simple Poisson algebra on the set
of parameters $Z_\alpha$. The following result is the straightforward generalization of the theorem
formulated for surfaces without marked points in~\cite{Fock1} and for surfaces with order-2 orbifold
points in~\cite{FG} (see also \cite{Ch1}).

\begin{theorem}\label{th-WP} In the coordinates $Z_\alpha $ on any fixed spine
corresponding to a surface with orbifold points,
the Weil--Petersson bracket $B_{{\mbox{\tiny WP}}}$ is given by
\be
\label{WP-PB}
\bigl\{f({\mathbf Z}),g({\mathbf Z})\bigr\}=\sum_{{\hbox{\small 3-valent} \atop \hbox{\small vertices $\alpha=1$} }}^{4g+2s+|\delta|-4}
\,\sum_{i=1}^{3\!\!\mod 3}
\left(\frac{\partial f}{\partial Z_{\alpha_i}} \frac{\partial g}{\partial Z_{\alpha_{i+1}}}
- \frac{\partial g}{\partial Z_{\alpha_i}} \frac{\partial f}{\partial Z_{\alpha_{i+1}}}\right),
\ee
where the sum ranges all the three-valent vertices of a graph and
$\alpha_i$ are the labels of the cyclically (counterclockwise)
ordered ($\alpha_4\equiv \alpha_1 $) edges incident to the vertex
with the label $\alpha$ irrespectively on whether these edges are internal or
pending edges of the graph. This bracket gives rise to the {\em Goldman
bracket} on the space of geodesic length functions \cite{Gold}.
\end{theorem}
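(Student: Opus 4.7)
The approach is to mimic the proofs of the analogous theorems without orbifold points \cite{Fock1} and with $\mathbb{Z}_2$-orbifold points \cite{FG},\cite{Ch1}. The key observation is that in the matrix expressions (\ref{Pgamma}) for elements of the Fuchsian group $\Delta_{g,s,r}$, only the matrices $X_{Z_\alpha}$ carry Teichm\"uller parameters, while $F_p$, $L$, and $R$ are all constant. Hence, when one computes Poisson brackets of traces (\ref{G}) by differentiation in shear coordinates, the insertions $F_p^k$ at pending edges are inert, and all nontrivial contributions come from the $X_Z$-sites at three-valent vertices.

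The proof would proceed in two steps. First, one checks that the right-hand side of (\ref{WP-PB}) defines a genuine Poisson bracket: skew-symmetry is immediate, and the Jacobi identity follows from the combinatorics of three-valent vertices exactly as in \cite{Fock1}, since the presence of one-valent orbifold endpoints (which do not appear in the sum) has no effect. Second, one verifies that this bracket reproduces the Goldman bracket on geodesic length functions (\ref{G}). This is done by expanding $\partial G_\gamma/\partial Z_\alpha$ via a Leibniz derivation applied to each occurrence of $X_{Z_\alpha}$ in (\ref{Pgamma}) and assembling the resulting bilinear form in $G_\gamma, G_{\gamma'}$ vertex by vertex. For crossings along internal edges, the computation is identical to the Fock--Goncharov one; for pending edges, the derivation acts on both $X_Z$-factors flanking $F_p^k$, but the $F_p^k$ itself remains inert.

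The principal obstacle is a clean treatment of pending edges when both geodesics in the bracket wind around the same orbifold point, possibly with different winding numbers. Here I would invoke Lemma~\ref{lem-length}: rewriting $X_Z(-1)^{k+1}F_p^k X_Z$ as a standard product of $X$-matrices along the tree of Fig.~\ref{fi:p-gone-dual}, with parameters $Z_k^{(\alpha)}, Y_j^{(\alpha)}$ from (\ref{ZK})--(\ref{YK}) depending only on $Z_\alpha$ modulo constants, reduces the pending-edge computation to the standard non-orbifold case. The diagonal parameters $Y_j^{(\alpha)}$ are pure constants and so drop out of any bracket, while contributions from the internal trivalent vertices of the blown-up tree cancel pairwise (each giving $\partial_{Z_\alpha} f\,\partial_{Z_\alpha} g - \partial_{Z_\alpha} g\,\partial_{Z_\alpha} f = 0$), leaving precisely the pending-edge term of (\ref{WP-PB}) at the unique vertex where the pending edge joins the rest of $\Gamma_{g,s,r}$. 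Putting these computations together yields the Goldman bracket, and hence the Weil--Petersson bracket, in the form (\ref{WP-PB}).
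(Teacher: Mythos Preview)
The paper does not supply a proof of this theorem: it is stated as ``the straightforward generalization'' of the corresponding results in \cite{Fock1} (no orbifold points) and \cite{FG}, \cite{Ch1} (${\mathbb Z}_2$-orbifold points), with no further argument. Your proposal is entirely consistent with this stance---you invoke the same references and then articulate \emph{why} the generalization is straightforward: the matrices $F_p$ are constant and carry no Teichm\"uller parameters, so the only sources of nontrivial brackets are the $X_{Z_\alpha}$ insertions at three-valent vertices, exactly as in the non-orbifold case.

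Your use of Lemma~\ref{lem-length} to handle the pending-edge case is more than the paper offers and is a sound idea. One small point to tighten: when you say the internal trivalent vertices of the blown-up tree ``cancel pairwise,'' the cleaner statement is that since every $Z_k^{(\alpha)}$ differs from $Z_\alpha$ by a constant (by (\ref{ZK})) and every $Y_j^{(\alpha)}$ is a constant (by (\ref{YK})), the chain rule gives $\partial_{Z_\alpha}=\sum_k \partial_{Z_k^{(\alpha)}}$, and the standard vertex-by-vertex Goldman computation on the blown-up tree then telescopes down to the single vertex where the pending edge meets the rest of $\Gamma_{g,s,r}$. With that clarification your outline is correct and goes somewhat beyond what the paper itself provides.
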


We identify the exchange matrix $B$ with the matrix of the Poisson relations for the variables $Z_\alpha$.

The center of this Poisson algebra is provided by the proposition.

\begin{prop}\label{prop12}
The center of the Poisson algebra {\rm(\ref{WP-PB})} is generated by
elements of the form $\sum Z_\alpha$, where the sum ranges all edges
of $\Gamma_{g,s,r} $ belonging to the same boundary component
taken with multiplicities.
This means, in particular, that each pending edge, irrespectively on the type of orbifold point it corresponds to,
contributes twice to such sums. The dimension of this center is obviously $s$.
\end{prop}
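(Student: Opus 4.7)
For each face $F$ of the spine $\Gamma_{g,s,r}$ define the candidate Casimir
\[
S_F := \sum_{\alpha} m_{F,\alpha}\, Z_\alpha,
\]
where $m_{F,\alpha}\in\{0,1,2\}$ counts the number of ribbon-sides of edge $\alpha$ bordering $F$; a pending edge has both of its ribbon-sides inside the unique face containing its orbifold tip, forcing $m_{F,\alpha}=2$ there and reproducing the multiplicity prescription in the statement.

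Centrality of $S_F$ follows from a local computation using (\ref{WP-PB}). Only three-valent vertices $v$ incident to $\gamma$ contribute to $\{S_F,Z_\gamma\}$. At such a $v$, with cyclic order $(\alpha_v,\gamma,\beta_v)$ and with $F_1,F_2,F_3$ denoting the faces at the corners $(\alpha_v,\gamma)$, $(\gamma,\beta_v)$, $(\beta_v,\alpha_v)$ respectively, the contribution of $v$ equals $m_{F,\alpha_v}-m_{F,\beta_v}$. The two ribbon-sides of $\alpha_v$ at $v$ face $F_1$ and $F_3$, while those of $\beta_v$ face $F_2$ and $F_3$; the shared $F_3$ cancels and the contribution collapses to $[F{=}F_1]-[F{=}F_2]$, i.e.\ the difference of indicators for the two faces flanking $\gamma$ at $v$. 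For an internal edge $\gamma$ with second endpoint $v'$, the two global ribbon-sides of $\gamma$ are seen from $v$ and from $v'$ with opposite left/right orientation (because $\gamma$ is traversed in opposite directions from its two endpoints), so the two flanking faces at $v'$ are swapped relative to those at $v$, and the single-vertex contributions at $v$ and $v'$ cancel in pairs. For a pending edge the two ribbon-sides join around the orbifold tip into a single face, so $F_1=F_2$ at the unique three-valent endpoint and the contribution already vanishes there. Thus $\{S_F,Z_\gamma\}=0$ for every edge $\gamma$.

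To conclude that the $S_F$ span the center, one needs both linear independence of the family $\{S_F\}_F$ and the upper bound $\dim\ker B\le s$, where $B_{\alpha\beta}=\{Z_\alpha,Z_\beta\}$. Geometrically each $S_F$ equals, up to sign, the hyperbolic perimeter of the hole indexed by $F$: by Remark~\ref{rm-positivity}, a path along the boundary of a face corresponds to a hyperbolic or parabolic element of length $|\sum Z_\alpha|$ summed along that face, with multiplicities exactly matching $m_{F,\alpha}$. The $s$ hole perimeters are known to be independent moduli on $\mathfrak T^H_{g,s,r}$, so the $S_F$ are linearly independent and $\dim\ker B\ge s$. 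For the reverse bound, the Corollary to Theorem~\ref{lem-metric} identifies $\{Z_\alpha\}$ with a chart of dimension $N=6g-6+3s+2r$ on $\mathfrak T^H_{g,s,r}$, and the Weil--Petersson form (\ref{WP-PB}) is classically symplectic on the $(N-s)$-dimensional level set of fixed hole perimeters (cf.\ \cite{Fock1,ChF}), forcing $\dim\ker B = s$. The $s$ independent Casimirs $S_F$ therefore exhaust the center.

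\emph{Main obstacle.} The centrality check is entirely local and becomes clean once the corner bookkeeping is organized. The nontrivial input is the upper bound $\dim\ker B\le s$: a purely combinatorial proof would require exhibiting an explicit $(N-s)\times(N-s)$ nondegenerate principal minor of $B$, which is case-heavy; the conceptual route used above is to invoke the geometric identification of $\{Z_\alpha\}$ with a global chart on $\mathfrak T^H_{g,s,r}$ and the classical nondegeneracy of the Weil--Petersson form on a fixed-perimeter slice.
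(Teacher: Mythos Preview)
The paper does not actually prove Proposition~\ref{prop12}; immediately after the statement it simply writes ``for the proof in the general case see Appendix~B of~\cite{ChP}.'' So there is no in-paper argument to compare against, and your write-up supplies considerably more than the paper itself does.

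Your centrality computation is correct. The key observation that the vertex contribution $m_{F,\alpha_v}-m_{F,\beta_v}$ collapses to the difference of face-indicators $[F{=}F_1]-[F{=}F_2]$ for the two sides of $\gamma$ at $v$, and that these cancel between the two endpoints of an internal edge (or vanish outright for a pending edge, whose two ribbon-sides lie in the same face), is exactly the combinatorial heart of the matter and matches the standard argument in the Chekhov--Penner reference. One small caveat: you implicitly assume the three half-edges at $v$ are distinct. If $\gamma$ is a loop based at $v$, or if $\alpha_v=\beta_v$ is a loop, the bookkeeping needs a sentence; in both cases the contribution still vanishes (for a loop $\gamma$ at $v$ the two cyclic contributions $\{Z_\gamma,Z_\beta\}$ come with opposite signs and cancel), so your conclusion is unaffected.

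For the dimension count you are right that the nontrivial input is the upper bound $\dim\ker B\le s$, and invoking nondegeneracy of the Weil--Petersson form on a fixed-perimeter leaf (as in \cite{Fock1,ChF}) is the standard route; this is precisely the content of the appendix the paper cites. Your proof is therefore correct and, modulo the external input you flag explicitly, self-contained.
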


For the proof in the general case see Appendix~B of~\cite{ChP}. Note that for the path homeomorphic to
the hole boundary, for any number of insertions of matrices $F_{p_i}$ with any $p_i$, we have
\bea
&{}&\hbox{tr\,}\bigl[LX_{Y_1}LX_{Y_2}\cdots LX_{Y_k}F_{p_i}X_{Y_k}L\cdots LX_{Y_{n-1}}LX_{Y_n}\bigr]
\nonumber\\
&{}&\quad = 2\cosh \Bigl[\frac{Y_1}{2}+\frac{Y_2}{2}+\dots +Y_k+\dots +\frac{Y_{n-1}}{2}+\frac{Y_n}{2}\Bigr].
\nonumber
\eea

\subsection{Flip morphisms of fat graphs}\label{ss:flip}

In this section, we present the complete list of mapping class group transformations
that enable us to change numbers $|\delta_k|$ of orbifold points associated with the $k$th
hole, change the cyclic ordering inside any of the sets $\delta_k$, flip any inner edge
of the graph and, eventually, change the orientation of the geodesic
spiraling to the hole perimeter (in the case where we have more than one hole).\footnote{These
transformations are dual to mutations of cluster variables from Sec.~\ref{s-algebra}.}
We can therefore
establish a morphism between any two of the graphs belonging to the same class $\Gamma_{g,s,r}$
with the same (unordered) sets of orbifold point orders $\{p_i\}_{i=1}^r$.

\subsubsection{Whitehead moves on inner edges}\label{sss:mcg}

Given a spine $\Gamma$ of $\Sigma$ and
assuming that the
edge $\alpha$ has distinct endpoints, we may produce
another spine $\Gamma _\alpha$ of $\Sigma$ by contracting and expanding edge $\alpha$ of
$\Gamma $, the edge labeled $Z$ in Figure~\ref{fi:flip}.
This transformation is dual to the mutation (\ref{diagonal}).
We say that $\Gamma _\alpha$ arises from $\Gamma$ by a
{\it Whitehead move} (or flip) along the edge $\alpha$.
A labeling of edges of the spine $\Gamma$ implies a natural labeling of edges of the
spine $\Gamma_\alpha$; we then obtain a morphism between the spines $\Gamma$ and $\Gamma_\alpha$.

\begin{figure}[tb]
\setlength{\unitlength}{1.5mm}%
\begin{picture}(-60,27)(60,48)
\thicklines
\put(32,62){\line( -1,2){ 4}}
\put(36,64){\line( -1,2){ 4}}
\put(32,62){\line(-1,-2){ 4}}
\put(36,60){\line(-1,-2){ 4}}
\put(36,60){\line( 1, 0){20}}
\put(36,64){\line( 1, 0){20}}
\put(60,62){\line( 1, 2){ 4}}
\put(60,62){\line( 1,-2){ 4}}
\put(56,64){\line( 1, 2){ 4}}
\put(56,60){\line( 1,-2){ 4}}
\thinlines
\put(21,62){\vector(-1, 0){  0}}
\put(21,62){\vector( 1, 0){ 5}}
\thicklines
\put(10,54){\line( -2,-1){ 8}}
\put(8,58){\line( -2,-1){ 8}}
\put(8,58){\line( 0,1){8}}
\put(12,58){\line( 0,1){8}}
\put(10,54){\line(2,-1){ 8}}
\put(12,58){\line(2,-1){ 8}}
\put(10,70){\line( -2,1){ 8}}
\put(8,66){\line( -2,1){ 8}}
\put(10,70){\line(2,1){ 8}}
\put(12,66){\line(2,1){ 8}}
\put( 4,74){\makebox(0,0)[lb]{$A$}}
\put(16,74){\makebox(0,0)[rb]{$B$}}
\put(14,62){\makebox(0,0)[lc]{$Z$}}
\put(16,50){\makebox(0,0)[rt]{$C$}}
\put( 4,50){\makebox(0,0)[lt]{$D$}}
\put(32,52){\makebox(0,0)[lt]{$D - \phi(-Z)$}}
\put(60,52){\makebox(0,0)[rt]{$C+\phi(Z)$}}
\put(60,73){\makebox(0,0)[rb]{$B-\phi(-Z)$}}
\put(32,73){\makebox(0,0)[lb]{$A+\phi(Z)$}}
\put(47,66){\makebox(0,0)[cb]{$-Z$}}
\color[rgb]{1,0,0}
\thinlines
\put(0.5,53){\line( 2,1){ 6}}
\put(1,52){\line( 2,1){ 6}}
\put(1.5,51){\line( 2,1){ 6}}
\put(0.5,71){\line( 2,-1){ 6}}
\put(19,72){\line( -2,-1){ 6}}
\put(18.5,51){\line( -2,1){ 6}}
\qbezier(6.5,56)(9,57.75)(9,60)
\qbezier(6.5,68)(9,66.25)(9,64)
\put(9,60){\line( 0,1){ 4}}
\qbezier(7,55)(10,56.5)(10,62)
\qbezier(13,69)(10,67.5)(10,62)
\qbezier(7.5,54)(10,55.75)(12.5,54)
\put(-1.5,72){\makebox(0,0)[cc]{\hbox{\small$1$}}}
\put(-1.5,72){\circle{3}}
\put(21,73){\makebox(0,0)[cc]{\hbox{\small$2$}}}
\put(21,73){\circle{3}}
\put(20.5,50){\makebox(0,0)[cc]{\hbox{\small$3$}}}
\put(20.5,50){\circle{3}}
\put(29,53.5){\line( 1,2){ 3}}
\put(30,53){\line( 1,2){ 3}}
\put(31,52.5){\line( 1,2){ 3}}
\put(29,70.5){\line( 1,-2){ 3}}
\put(62,71){\line( -1,-2){ 3}}
\put(61,52.5){\line( -1,2){ 3}}
\qbezier(34,58.5)(35.25,61)(38,61)
\qbezier(58,58.5)(56.75,61)(54,61)
\put(38,61){\line( 1,0){ 16}}
\qbezier(33,59)(34.5,62)(38,62)
\qbezier(59,65)(57.5,62)(54,62)
\put(38,62){\line( 1,0){ 16}}
\qbezier(32,59.5)(33.25,62)(32,64.5)
\put(28,72.5){\makebox(0,0)[cc]{\hbox{\small$1$}}}
\put(28,72.5){\circle{3}}
\put(63,73){\makebox(0,0)[cc]{\hbox{\small$2$}}}
\put(63,73){\circle{3}}
\put(62,50.5){\makebox(0,0)[cc]{\hbox{\small$3$}}}
\put(62,50.5){\circle{3}}
\end{picture}
\caption{\small Flip, or Whitehead move on the shear coordinates $Z_\alpha$. The outer edges can be
pending, but the edge with respect to which the morphism is performed must be an internal
edge. We also indicate the correspondences between geodesic paths under the flip.}
\label{fi:flip}
\end{figure}
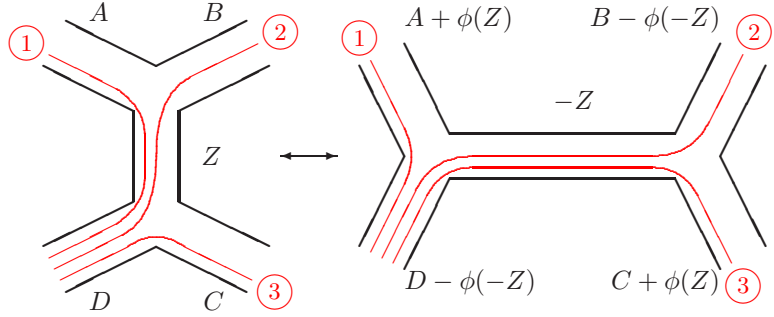

\begin{prop} {\rm \cite{ChF}}\label{propcase}
Setting $\phi (Z)={\rm log}(1+e^Z)$ and adopting the notation of Fig.~\ref{fi:flip}
for shear coordinates of nearby edges, the effect of a
Whitehead move is
\be
W_Z\,:\ (A,B,C,D,Z)\to (A+\phi(Z), B-\phi(-Z), C+\phi(Z), D-\phi(-Z), -Z)
\label{abc}
\ee
In the various cases where the edges are not distinct
and identifying an edge with its shear coordinate in the obvious notation we have:
if $A=C$, then $A'=A+2\phi(Z)$;
if $B=D$, then $B'=B-2\phi(-Z)$;
if $A=B$ (or $C=D$), then $A'=A+Z$ (or $C'=C+Z$);
if $A=D$ (or $B=C$), then $A'=A+Z$ (or $B'=B+Z$).
Any subset of edges $A$, $B$, $C$, and $D$ can be pending edges of the graph.
\end{prop}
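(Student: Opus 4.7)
The plan is to derive the flip formulas directly from the cross-ratio definition of the shear coordinates given in equation~(\ref{four-term}). Place the two ideal triangles sharing the inner edge labeled $Z$ inside the Poincar\'e upper half-plane, and normalize by a M\"obius transformation so that the four vertices of the surrounding ideal quadrangle sit at $\{-1,\,0,\,e^{Z},\,\infty\}$ in the cyclic order required by~(\ref{four-term}). This is exactly the normalization built into the definition, so by construction the shared edge has shear coordinate $Z$ under this placement.

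Next, I would compute the shear coordinate of the \emph{other} diagonal of this quadrangle from the same formula~(\ref{four-term}), using the appropriately cyclically shifted 4-tuple of vertices. A one-line cross-ratio computation gives $e^{Z'} = e^{-Z}$, accounting for the sign flip of the cross-ratio under the cyclic shift; this justifies the replacement $Z \mapsto -Z$.

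For each of the four neighboring edges $A,B,C,D$, the key point is that each such edge is itself the common side of two ideal triangles, and therefore lies in its own quadrangle on whose four vertices its shear coordinate is again given by~(\ref{four-term}). Before the flip, one of these two triangles is one of the two central triangles; after the flip, that triangle is replaced by the other triangle in the new decomposition, so \emph{one} of the four vertices in the quadrangle of, say, edge $A$, is shifted. Since we have pinned the positions of the central four vertices at $\{-1,0,e^{Z},\infty\}$, the changed vertex moves between the two remaining specific positions, and the ratio $e^{A'}/e^{A}$ of new to old cross-ratio collapses to either $1+e^{Z}$ or $(1+e^{-Z})^{-1}$. Taking logarithms gives exactly the shift by $+\phi(Z)$ or $-\phi(-Z)$ prescribed in~(\ref{abc}), with the sign determined by whether the moving vertex is adjacent (in cyclic order around the quadrangle of the side edge) to the retained vertex on the ``$+\phi(Z)$'' side or the ``$-\phi(-Z)$'' side. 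Doing this computation in turn for $A$, $B$, $C$, $D$ and observing the opposite-pair symmetry of the quadrangle yields the claimed alternating pattern $(+\phi(Z),-\phi(-Z),+\phi(Z),-\phi(-Z))$.

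Finally, the degenerate cases arise when the four side edges are not distinct, in which case a single edge of the ambient graph plays the role of two of $A,B,C,D$ at once and receives both corrections. When $A=C$ the shifts $+\phi(Z)$ add to give $+2\phi(Z)$; when $B=D$ analogously. When $A=B$ or $A=D$ (so that a side edge is glued to an adjacent side), the moving vertex of that edge's quadrangle is now \emph{both} endpoints of the flipped diagonal, producing the combined shift $\phi(Z) + \phi(-Z) = \log(1+e^{Z}) + \log(1+e^{-Z}) = Z + 2\phi(-Z) - \phi(-Z) \cdot 2$, which after simplification collapses to $Z$ using the identity $\phi(Z) - \phi(-Z) = Z$. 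I expect the main obstacle to be purely bookkeeping: tracking the cyclic orderings and signs of cross-ratios consistently, and checking that the degenerate-case formulas are obtained as the correct coincidence limits of the generic formula. The fact that some of the outer edges may be pending rather than inner plays no role in the derivation, since the cross-ratio formula~(\ref{four-term}) only uses the four vertices of the relevant ideal quadrangle and is insensitive to the type of orbifold data living beyond the edge.
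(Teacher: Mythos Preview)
The paper does not actually prove this proposition: it is quoted with the citation~\cite{ChF} and no argument is supplied, so there is nothing in the paper to compare your proposal against.  Your cross-ratio derivation is the standard one and is correct in outline; in particular, using the second half of~(\ref{four-term}), namely $e^{Z}=a_{1}a_{3}/a_{2}a_{4}$, together with the Ptolemy relation~(\ref{diagonal}), gives a clean route: the new diagonal has $e^{Z'}=a_{2}a_{4}/a_{1}a_{3}=e^{-Z}$, and for each side edge the only factor that changes in its cross-ratio is the diagonal $\lambda$-length, so the multiplicative shift is $z'/z$ or its reciprocal, which equals $1+e^{\pm Z}$ after dividing the Ptolemy relation through.

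One genuine slip to fix: in the degenerate case $A=B$ (and likewise $A=D$, $C=D$, $B=C$) the two contributions are $+\phi(Z)$ and $-\phi(-Z)$, not $+\phi(Z)$ and $+\phi(-Z)$.  The combined shift is therefore
\[
\phi(Z)-\phi(-Z)=\log\frac{1+e^{Z}}{1+e^{-Z}}=Z
\]
directly, and the intermediate expression you wrote, ``$\phi(Z)+\phi(-Z)=Z+2\phi(-Z)-\phi(-Z)\cdot 2$'', is both the wrong starting point and not a valid manipulation.  With that correction your argument goes through.
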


We have the lemma establishing the properties of
invariance w.r.t. the flip morphisms~\cite{ChF}.

\begin{lm} \label{lem-abc}
Transformation~{\rm(\ref{abc})} preserves
the traces of products over paths {\rm(\ref{G})} (the geodesic functions) and
transformation~{\rm(\ref{abc})} simultaneously preserves
Poisson structure {\rm(\ref{WP-PB})} on the shear coordinates.
\end{lm}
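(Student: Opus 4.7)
The plan is to verify the two statements by a direct local computation near the flipped edge, reducing the problem to a finite list of $2\times 2$ matrix identities and a finite list of Poisson-bracket checks.

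For the trace invariance, I would classify the possible behaviors of a geodesic $\gamma$ in a neighborhood of the edge $Z$. Either $\gamma$ does not enter this neighborhood at all (in which case its matrix product is unchanged and nothing needs to be checked), or it enters through one of $A,B,C,D$ and leaves through another, either crossing the edge $Z$ or turning around one of its endpoints. Each of these finitely many cases produces a local factor in $P_\gamma$ of the form $X_U T_1 X_Z T_2 X_V$ with $T_1,T_2\in\{L,R\}$. The proof reduces to the two basic identities
\begin{equation*}
X_U L X_Z R X_V \;=\; X_{U+\phi(Z)}\, X_{V-\phi(-Z)} \qquad (\text{non--crossing case})
\end{equation*}
and
\begin{equation*}
X_U L X_Z L X_V \;=\; X_{U+\phi(Z)}\, R\, X_{-Z}\, R\, X_{V+\phi(Z)} \qquad (\text{crossing case}),
\end{equation*}
together with the symmetric variants obtained by swapping $L\leftrightarrow R$. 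Both reduce to the algebraic facts $e^{\phi(Z)}=1+e^{Z}$ and $e^{-\phi(-Z)}=1+e^{-Z}$; verifying them is a one-line $2\times 2$ computation. The degenerate cases in which two of $A,B,C,D$ coincide are handled by applying the same identities and reading off the doubled shifts listed in Proposition~\ref{propcase}. When one or more of the outer edges is a pending edge carrying an $F_p^k$ insertion, Lemma~\ref{lem-length} first unfolds those insertions into standard $L/R$-strings of $X_{Z_i}$ and $X_{Y_j}$, and the local flip identity applies verbatim.

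For invariance of the Poisson bracket~(\ref{WP-PB}), I would compute the brackets of the new variables $A'=A+\phi(Z)$, $B'=B-\phi(-Z)$, $C'=C+\phi(Z)$, $D'=D-\phi(-Z)$, $Z'=-Z$ directly from the old bracket by the chain rule, using that $\phi$ depends only on $Z$ and that $\phi'(Z)+\phi'(-Z)=1$. The only nonzero pairwise brackets in either variable set come from the trivalent vertices incident to $A,B,C,D,Z$, so this reduces to a finite check: one has to verify that the cyclic orderings of the two new trivalent vertices in the right-hand configuration of Fig.~\ref{fi:flip} reproduce, via~(\ref{WP-PB}), the brackets computed from the left-hand configuration. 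Because only two vertices are involved, the calculation is short; an identical argument appears in \cite{ChF} for the case without orbifold points, and the presence of pending edges changes nothing since pending edges enter the sum~(\ref{WP-PB}) on the same footing as internal edges.

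The main technical obstacle is the bookkeeping in the crossing identity: keeping track of signs of the $F_p^k$ prefactors when one of the outer edges is a pending edge, and managing all the degenerate configurations where outer edges are identified. Once one commits to expanding all $F_p$-insertions via Lemma~\ref{lem-length} so that the entire product $P_\gamma$ is a word in the matrices $X_S$, $L$, and $R$ alone, both the trace computation and the Poisson computation become mechanical, and the lemma follows from the pre-orbifold result of \cite{ChF} applied to the expanded spine.
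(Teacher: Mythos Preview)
The paper itself does not prove this lemma: it is stated with an attribution to~\cite{ChF} and no argument is given. Your approach---reduce the trace claim to a finite list of local $2\times2$ matrix identities governed by the possible ways a path can traverse the neighbourhood of the flipped edge, and check the Poisson claim by the chain rule at the two affected trivalent vertices---is exactly the standard argument of~\cite{ChF}, so in spirit you are doing precisely what the paper invokes.

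That said, the explicit identities you wrote down are not correct as stated. In the ``non-crossing'' identity the right-hand side $X_{U+\phi(Z)}\,X_{V-\phi(-Z)}$ has no turn matrix between the two edge matrices; but a product $X_aX_b$ is always diagonal, whereas the left-hand side $X_U L X_Z R X_V$ is not (take $U=V=Z=0$ to see this). In the geometry of Fig.~\ref{fi:flip} every local segment in the new graph passes through at least one trivalent vertex, so an $L$ or $R$ must appear on the right. Your ``crossing'' identity has the converse problem: the case $X_ULX_ZLX_V$ joins the two outer edges that become incident to the \emph{same} new vertex, so in the flipped graph the segment does \emph{not} cross $-Z$ and the right-hand side should be of the form $X_{U'}\,T\,X_{V'}$ with a single turn. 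You have the two cases swapped, and each right-hand side is missing or carrying an extra turn. Relatedly, the ``algebraic fact'' $e^{-\phi(-Z)}=1+e^{-Z}$ is a slip; it should read $e^{\phi(-Z)}=1+e^{-Z}$. None of this affects the method---once you redo the three local cases (the three red paths in Fig.~\ref{fi:flip}) carefully, the identities hold and the proof goes through exactly as you outline.
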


\subsubsection{Whitehead moves on pending edges}\label{sss:pending}

Choosing other representatives of the orbifold points in the Poincar\'e disc, we obtain different
fundamental domains with different cyclic ordering of the (preimages) of the orbifold points
$s_i$ $(i=1,\dots, |\delta_k|)$ possibly with transferring orbifold points from one set $\delta_k$
to another set $\delta_{k'}$.

Analogously to the mutation in Fig.~\ref{fi:p-gone}, flipping the pending edge corresponds to choosing
another fundamental domain, as shown in Fig.~\ref{fi:interchange-p-dual}. We take there $e^Z$ given by formula (\ref{eZ})
and $e^{Y_{1,2}}$ and $e^{\widetilde{Y_{1,2}}}$ given by the standard cross-ratio relations, for example,
$$
e^{Y_2}=\frac{\left(1-e^{i\phi}\right)\left(e^{ib}-e^{2\pi i/p}\right)}
{\left(e^{i\phi}-e^{ib}\right)\left(1-e^{2\pi i/p}\right)},\qquad
e^{\widetilde{Y_2}}=\frac{\left(e^{-2\pi i/p+ib}-1\right)\left(e^{-2\pi i/p+i\phi}-e^{i\phi}\right)}
{\left(e^{-2\pi i/p+i\phi}-e^{-2\pi i/p+ib}\right)\left(1-e^{i\phi}\right)}.
$$

\begin{figure}[tb]
{\psset{unit=0.8}
\begin{pspicture}(-4,-5)(4,1)
\psarc[linewidth=0.5pt](0,0){4}{210}{320}
\psarc[linestyle=dashed, linecolor=blue, linewidth=1pt](0,0){1.65}{205}{325}
\rput(0,0){\psarc[linecolor=green, linewidth=0.5pt](0,-5.65){4}{45}{135}}
\rput{25}(0,0){\psarc[linecolor=green, linewidth=0.5pt](0,-4.25){1.45}{20}{160}}
\rput{-20}(0,0){\psarc[linecolor=green, linewidth=0.5pt](0,-4.41){1.86}{25}{155}}
\rput{-30}(0,0){\psarc[linecolor=green, linewidth=0.5pt](0,-4.14){1.07}{15}{165}}
\rput{35}(0,0){\psarc[linecolor=green, linewidth=0.5pt](0,-4.06){.7}{10}{170}}
\rput{15}(0,0){\psarc[linecolor=green, linewidth=0.5pt](0,-4.06){.7}{10}{170}}
\rput{-5}(0,0){\psarc[linecolor=green, linewidth=0.5pt](0,-4.06){.7}{10}{170}}
%
\rput(0,0){
\psline[linewidth=0.5pt](0.2,-1.2)(0.4,-2.2)
\psline[linewidth=0.5pt](-0.3,-1.25)(-0.1,-2.25)
\psline[linewidth=0.5pt](1.6,-2.8)(0.4,-2.2)
\psline[linewidth=0.5pt](-0.8,-2.95)(-0.1,-2.25)
\psline[linewidth=0.5pt](0.2,-2.5)(1.2,-3)
\psline[linewidth=0.5pt](0.2,-2.5)(-0.35,-3.05)
}
\rput(-0.9,-2.5){\makebox(0,0)[cc]{$Y_1$}}
\rput(1.1,-2.1){\makebox(0,0)[cc]{$Y_2$}}
\rput(0,-.9){\makebox(0,0)[cc]{$Z$}}
\rput(1.6,-3.9){\makebox(0,0)[lt]{$e^{ib}$}}
\rput(3,-3){\makebox(0,0)[lt]{$e^{2\pi i/p}$}}
\rput(-3,-3){\makebox(0,0)[rt]{$1$}}
\rput(0.2,-4.2){\makebox(0,0)[ct]{$e^{i\phi}$}}
\rput(-1.2,-4.1){\makebox(0,0)[ct]{$e^{ia}$}}
\end{pspicture}
}
{\psset{unit=0.8}
\begin{pspicture}(2,-5)(-5,1)
\rput(-6.5,-1){\psline[linewidth=2pt]{->}(0,0)(0.5,0)}
\psarc[linewidth=0.5pt](0,0){4}{165}{280}
\psarc[linestyle=dashed, linecolor=blue, linewidth=1pt](0,0){1.65}{160}{285}
\rput{-40}(0,0){\psarc[linecolor=red, linewidth=0.5pt](0,-5.65){4}{45}{135}}
%
\rput{-65}(0,0){\psarc[linecolor=red, linewidth=0.5pt](0,-4.25){1.45}{20}{160}}
\rput{-20}(0,0){\psarc[linecolor=green, linewidth=0.5pt](0,-4.41){1.86}{25}{155}}
\rput{-30}(0,0){\psarc[linecolor=green, linewidth=0.5pt](0,-4.14){1.07}{15}{165}}
\rput{-55}(0,0){\psarc[linecolor=red, linewidth=0.5pt](0,-4.06){.7}{10}{170}}
\rput{-75}(0,0){\psarc[linecolor=red, linewidth=0.5pt](0,-4.06){.7}{10}{170}}
\rput{-5}(0,0){\psarc[linecolor=green, linewidth=0.5pt](0,-4.06){.7}{10}{170}}
\rput(-3,-3){\makebox(0,0)[rt]{$1$}}
\rput(0,-4.2){\makebox(0,0)[lt]{$e^{i\phi}$}}
\rput(-1.2,-4.1){\makebox(0,0)[ct]{$e^{ia}$}}
\rput(-3.9,-1.6){\makebox(0,0)[rt]{$e^{-\frac{2\pi i}{p}+ib}$}}
\rput(-4.2,0){\makebox(0,0)[rt]{$e^{-\frac{2\pi i}{p}+i\phi}$}}
%
\rput{-35}(0,0){
\psline[linewidth=0.5pt](-0.2,-1.2)(-0.4,-2.2)
\psline[linewidth=0.5pt](0.3,-1.25)(0.1,-2.25)
\psline[linewidth=0.5pt](-1.6,-2.8)(-0.4,-2.2)
\psline[linewidth=0.5pt](0.6,-2.75)(0.1,-2.25)
\psline[linewidth=0.5pt](-0.2,-2.5)(-1.2,-3)
\psline[linewidth=0.5pt](-0.2,-2.5)(0.2,-2.9)
}
\rput(-0.8,-2.1){\makebox(0,0)[cc]{${\widetilde{Y_1}}$}}
\rput(-2.2,-1){\makebox(0,0)[cc]{${\widetilde{Y_2}}$}}
\rput(-.5,-0.6){\makebox(0,0)[cc]{${\widetilde{Z}}$}}
\end{pspicture}
}
\caption{\small
The transformation of dual variables ($h$-lengths)
$\{Y_1, Y_2, Z\}\to \{{\widetilde{Y_1}},{\widetilde{Y_2}},{\widetilde{Z}}\}$ described
by (\ref{morphism-pending}) with $w=2\cos(\pi/p)$.
}
\label{fi:interchange-p-dual}
\end{figure}
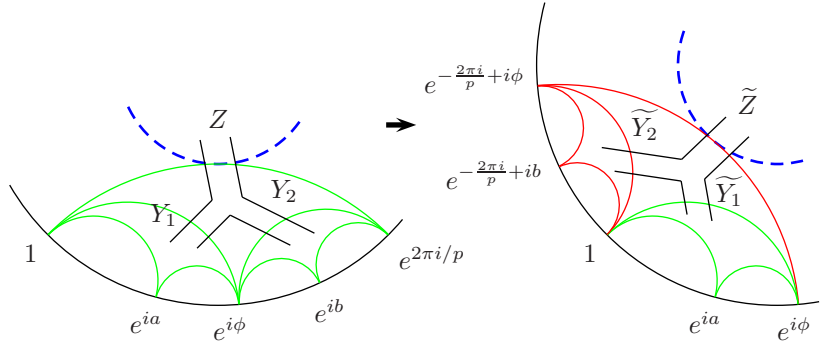

\begin{lm} \label{lem-pending1}
The transformation~in Fig.~\ref{fi:interchange-p-dual} with $e^Z$ given by (\ref{eZ}) has the form
\be
\label{morphism-pending}
\{\tilde Y_1,\tilde Y_2,\tilde Z\}= \{Y_1-\log(1+we^{-Z}+e^{-2Z}),Y_2+\log(1+we^Z+e^{2Z}),-Z\}
\ee
and is the morphism of the space
${\mathcal T}_{g,s,r}^H$. These
morphisms preserve both Poisson structures {\rm(\ref{WP-PB})} and the geodesic
functions. In Fig.~\ref{fi:interchange-p-dual} any (or both) of $Y$-variables can be
variables of pending edges (the transformation formula is insensitive to it).
\end{lm}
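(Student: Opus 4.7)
The plan is to reduce the pending-edge flip of Lemma~\ref{lem-pending1} to a finite composition of ordinary Whitehead moves via the unfolding procedure of Lemma~\ref{lem-length}. First I would replace the pending edge carrying shear coordinate $Z$ together with its attached rotation $F_p$ by the chain of ordinary edges $\{Z_1,Y_2,\ldots,Y_{p-2},Z_p\}$ inside the $p$-fold covering of the orbifold neighborhood, as prescribed by~(\ref{P-gone->pending}). The resulting spine has no orbifold-point vertex at that site; it is an ordinary fat graph attached to the rest of the surface through a single edge incident to the original neighbours $Y_1$ and $Y_2$. The geometric operation in Figure~\ref{fi:interchange-p-dual} --- moving the orbifold point across its adjacent edge, equivalently shifting the distinguished angular direction $\phi\mapsto\phi-2\pi/p$ in the $p$-gone --- is then realised in this unfolded picture as an explicit finite sequence of ordinary Whitehead moves (Proposition~\ref{propcase}) on the internal edges of the $p$-gone and on the connector edge to $Y_1, Y_2$. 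Each such move preserves the geodesic functions~(\ref{G}) and the Weil--Petersson bracket~(\ref{WP-PB}) by Lemma~\ref{lem-abc}, so their composition does as well. Re-applying Lemma~\ref{lem-length} in reverse collapses the transformed $p$-gone back to a single pending edge carrying the new shear coordinate $\tilde Z$, establishing the morphism property of the map.

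To derive the explicit formula~(\ref{morphism-pending}), I would use the cluster/$\lambda$-length picture. By Lemma~\ref{cc-prime} the pending cluster variable mutates as $c\mapsto c'=(a^2+wab+b^2)/c$, where $a,b$ are the two neighbouring $\lambda$-lengths; substituting this into the cross-ratio formula~(\ref{four-term}) that expresses $e^{Y_1}$ and $e^{Y_2}$ in terms of neighbouring $\lambda$-lengths produces exactly the logarithmic shifts $\pm\log(1+we^{\pm Z}+e^{\mp 2Z})$, together with $\tilde Z=-Z$ (the sign change of $Z$ reflecting the involutivity of the underlying cluster mutation). Equivalently, one can verify the formula at the $2\times 2$-matrix level using the compact identity $X_ZF_pX_Z=\left(\begin{smallmatrix}w & -e^Z\\ e^{-Z} & 0\end{smallmatrix}\right)$, obtained by direct computation from~(\ref{XZ}) and~(\ref{F-p}), together with an induction on $k$ via $F_p^p=(-1)^{p-1}{\mathbb E}$, to establish
\be
X_{Y_1}LX_ZF_p^k X_Z RX_{Y_2}=X_{\tilde Y_1}RX_{\tilde Z}F_p^{p-k}X_{\tilde Z}LX_{\tilde Y_2},\qquad k=1,\ldots,p-1,
\ee
for the stated values of $\tilde Z$, $\tilde Y_1$, $\tilde Y_2$. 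Poisson-bracket invariance then follows automatically from the first paragraph, but can also be checked directly using $\phi_p'(Z)=(we^Z+2e^{2Z})/(1+we^Z+e^{2Z})$ in~(\ref{WP-PB}) at the three-valent vertex.

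The main technical obstacle is the simultaneous bookkeeping: one must verify that a single substitution $\tilde Z=-Z$, $\tilde Y_i=Y_i\mp\log(1+we^{\mp Z}+e^{\pm 2Z})$ matches the transformation law for \emph{all} $p-1$ rotation powers $F_p,F_p^2,\ldots,F_p^{p-1}$ at once, not just for a single geodesic. This reduces to showing that the scalar factor $1+we^Z+e^{2Z}$ --- the reciprocal polynomial governing the generalised mutation of Section~\ref{s-algebra} --- appears uniformly across all powers of $F_p$, which is precisely the content of Lemma~\ref{cc-prime}: the mutation is a single operation independent of which geodesic function we use to detect it. The cases where $Y_1$ or $Y_2$ is itself a pending edge require no additional argument, since the entire derivation uses only the local structure at the three-valent vertex where $Z$, $Y_1$, $Y_2$ meet, together with matrix identities insensitive to the nature of the neighbouring edges.
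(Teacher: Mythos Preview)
Your proposal is correct but more elaborate than the paper's own argument. The paper proceeds by bare verification: it states that Poisson invariance is a simple check, and for geodesic-function invariance it writes down the $2\times 2$-matrix identities
\[
X_{Y_2}LX_Z(-1)^{k-1}F_p^{k} X_ZLX_{Y_1}=X_{\tilde Y_2}RX_{\tilde Z}(-1)^{p-k}F_p^{\,p-k+1}X_{\tilde Z}RX_{\tilde Y_1}
\]
(together with the analogous $R\cdots L$ and $L\cdots R$ turn-configurations) ``to be verified directly''. This is essentially your second method, though note that your displayed identity omits the sign factors $(-1)^{k-1}$, records only one of the four turn-cases, and has exponent $p-k$ rather than $p-k+1$ on the right; these would need correcting in an actual write-up. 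Your primary route --- unfolding the pending edge into the covering $p$-gone via Lemma~\ref{lem-length}, realising the move as a finite composite of ordinary Whitehead flips, and refolding --- is genuinely different from the paper's proof and is exactly the viewpoint the paper develops separately in Remark~\ref{rem-hole} (using the auxiliary-hole device to recover~(\ref{morphism-pending}) as a composition of two standard flips). That route inherits Poisson- and geodesic-invariance automatically from Lemma~\ref{lem-abc} and makes transparent \emph{why} the single scalar $1+we^{\pm Z}+e^{\pm 2Z}$ governs all rotation powers $F_p^k$ at once; the paper's direct matrix check is shorter but leaves that uniformity in $k$ as an unexplained computational fact.
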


{\bf Proof.} Verifying the preservation of Poisson relations (\ref{WP-PB}) is simple, whereas
for traces over paths we have four cases, and in each of these cases we have the following
$2\times2$-{\em matrix} equalities to be verified directly:
\bea
X_{Y_2}LX_ZF_p^kX_ZLX_{Y_1}&=&
-X_{{\tilde Y}_2}RX_{\tilde Z}F_p^{k-1}X_{\tilde Z}RX_{{\tilde Y}_1},\nonumber\\
X_{Y_1}RX_ZF_p^kX_ZLX_{Y_1}&=&
-X_{{\tilde Y}_1}LX_{\tilde Z}F_p^{k}X_{\tilde Z}RX_{{\tilde Y}_1},\nonumber\\
X_{Y_2}LX_ZF_p^kX_ZRX_{Y_2}&=&
-X_{{\tilde Y}_2}RX_{\tilde Z}F_p^{k}X_{\tilde Z}LX_{{\tilde Y}_2}.\nonumber
\eea

Using flip morphisms in Fig.~\ref{fi:interchange-p-dual} and in formula
(\ref{abc}), we establish a morphism between any two algebras
corresponding to surfaces of the same genus, same number of boundary
components, and same numbers of ${\mathbb Z}_p$-orbifold
points of each sort $p$; the distribution of latter into the boundary components as well as
the cyclic ordering inside each of the boundary component can be arbitrary.

It is a standard tool that if, after a
series of morphisms, we come to a graph of the same combinatorial
type as the initial one (disregarding labeling of edges but distinguishing between
different orbifold types of pending vertices), we
associate a {\em mapping class group} operation to this morphism
therefore passing from the groupoid of morphisms to the group of
modular transformations.

\begin{remark}\label{rem-hole}
Another way of interpreting transformations (\ref{morphism-pending}) is as follows.
We can imitate the above
flips/mutations by introducing a new ``hole'' with possibly imaginary perimeter $P$
and considering the following chain of {\em standard} flips:
\begin{center}
{\psset{unit=0.7}
\begin{pspicture}(-2.5,-3)(2.5,3)
\pscircle(-1,0){0.5}
\psarc[linewidth=1pt](-1,0){1}{15}{345}
\pcline[linewidth=1pt](-0.1,0.25)(1,0.25)
\pcline[linewidth=1pt](-0.1,-0.25)(1,-0.25)
\pcline[linewidth=1pt](1,1.5)(1,0.25)
\pcline[linewidth=1pt](1,-1.5)(1,-0.25)
\pcline[linewidth=1pt](1.5,-1.5)(1.5,1.5)
\rput(-1,1.4){\makebox(0,0){$P$}}
\rput(.5,0.7){\makebox(0,0){$Z$}}
\rput(1.9,1.2){\makebox(0,0){$X$}}
\rput(1.9,-1.2){\makebox(0,0){$Y$}}
\end{pspicture}
}
{\psset{unit=0.7}
\begin{pspicture}(-3,-3)(3,3)
\rput(-3,0){\makebox(0,0){$\to$}}
\psbezier[linewidth=1pt](0,0.5)(-0.5,0)(-0.5,0)(0,-0.5)
\psbezier[linewidth=1pt](0,0.5)(0.5,0)(0.5,0)(0,-0.5)
\psbezier[linewidth=1pt](-0.25,0.75)(-1,0)(-1,0)(-0.25,-0.75)
\psbezier[linewidth=1pt](0.25,0.75)(1,0)(1,0)(0.25,-0.75)
\pcline[linewidth=1pt](-0.25,.75)(-0.25,1.5)
\pcline[linewidth=1pt](0.25,.75)(0.25,1.5)
\pcline[linewidth=1pt](-0.25,-.75)(-0.25,-1.5)
\pcline[linewidth=1pt](0.25,-.75)(0.25,-1.5)
\rput(-1.7,0){\makebox(0,0){$P+Z$}}
\rput(1.5,0){\makebox(0,0){$-Z$}}
\rput(2.2,1.2){\makebox(0,0){$X{+}\log(1{+}e^Z)$}}
\rput(2.2,-1.2){\makebox(0,0){$Y{-}\log(1{-}e^{-Z})$}}
\end{pspicture}
}
{\psset{unit=0.7}
\begin{pspicture}(-3,-3)(3,3)
\rput(-2,0){\makebox(0,0){$\to$}}
\pscircle(1,0){0.5}
\psarc[linewidth=1pt](1,0){1}{-165}{165}
\pcline[linewidth=1pt](0.1,0.25)(-1,0.25)
\pcline[linewidth=1pt](0.1,-0.25)(-1,-0.25)
\pcline[linewidth=1pt](-1,1.5)(-1,0.25)
\pcline[linewidth=1pt](-1,-1.5)(-1,-0.25)
\pcline[linewidth=1pt](-1.5,-1.5)(-1.5,1.5)
\rput(2.3,0){\makebox(0,0){$P$}}
\rput(-.3,1.1){\makebox(0,0){$-Z{-}P$}}
\rput(1,2){\makebox(0,0){$X{+}\log(1{+}e^Z){+}\log(1{+}e^{P{+}Z})$}}
\rput(1,-2){\makebox(0,0){$Y{-}\log(1{-}e^{-Z}){-}\log(1{+}e^{-P{-}Z})$}}
\end{pspicture}
}
\end{center}

In this pattern, it is useful to shift the variable $Z$ and introduce
$$
{\overline Z}=Z+\frac{P}2.
$$
The transformation for the variables $X$ and $Y$ then just becomes (\ref{morphism-pending}),
\bea
\left[\begin{array}{l}
X \\ Y\\ \overline Z
\end{array}\right]&\to&
\left[\begin{array}{l}
X+\log\bigl[(1+e^{\overline Z-P/2})(1+e^{\overline Z+P/2})\bigr] \\
Y-\log\bigl[(1+e^{-\overline Z+P/2})(1+e^{-\overline Z-P/2})\bigr]\\
-\overline Z
\end{array}\right]=\nn\\
&=&
\left[\begin{array}{l}
X+\log\bigl[1+\omega_p e^{\overline Z}+e^{2\overline Z}\bigr] \\
Y-\log\bigl[1+\omega_p e^{-\overline Z}+e^{-2\overline Z}\bigr]\\
-\overline Z
\end{array}\right],\nn
\eea
where $\omega_p=e^{P/2}+e^{-P/2}$.

Since $P$ is not affected by the above sequence of flips, we can merely erase the corresponding
loop and present it as flipping the pending edge, which carries besides the cluster variable
$\overline Z$ also the coefficient $\omega_p$, which is preserved by mutations
and is equal to $2\cos(\pi/p)$ in the geometric case.
\end{remark}

\subsubsection{Changing the spiraling direction}

The last mapping class group transformation changes the sign of the hole perimeter:
\be
\label{loopinvert}
{\psset{unit=0.7}
\begin{pspicture}(-5,-3)(7,1)
\pcline[linewidth=1pt](-6,-0.5)(-4,-0.5)
\pcline[linewidth=1pt](-6,-1.5)(-4,-1.5)
\psbezier[linewidth=1pt](-4,-0.5)(-3,1)(-1,1)(-1,-1)
\psbezier[linewidth=1pt](-4,-1.5)(-3,-3)(-1,-3)(-1,-1)
\psbezier[linewidth=1pt](-3.2,-1)(-2.4,-0.2)(-2,-0.3)(-2,-1)
\psbezier[linewidth=1pt](-3.2,-1)(-2.4,-1.8)(-2,-1.7)(-2,-1)
\rput(-5.5,0.2){\makebox(0,0){$Y$}}
\rput(-1.5,1){\makebox(0,0){$P$}}
\pcline[linewidth=1pt]{<->}(0,-1)(2,-1)
\pcline[linewidth=1pt](3,-0.5)(5,-0.5)
\pcline[linewidth=1pt](3,-1.5)(5,-1.5)
\psbezier[linewidth=1pt](5,-0.5)(6,1)(8,1)(8,-1)
\psbezier[linewidth=1pt](5,-1.5)(6,-3)(8,-3)(8,-1)
\psbezier[linewidth=1pt](5.8,-1)(6.6,-0.2)(7,-0.3)(7,-1)
\psbezier[linewidth=1pt](5.8,-1)(6.6,-1.8)(7,-1.7)(7,-1)
\rput(3.5,0.2){\makebox(0,0){$Y+P$}}
\rput(7.5,1){\makebox(0,0){$-P$}}
\rput(10,0){\makebox(0,0){.}}
\pcline[linecolor=red, linewidth=0.5pt]{->}(-6,-.7)(-3.9,-.7)
\pcline[linecolor=red, linewidth=0.5pt](-6,-1.3)(-3.9,-1.3)
\psbezier[linecolor=red, linewidth=0.5pt](-3.9,-.7)(-3,.7)(-1.3,.8)(-1.3,-1)
\psbezier[linecolor=red, linewidth=0.5pt](-3.9,-1.3)(-3,-2.7)(-1.3,-2.8)(-1.3,-1)
\pcline[linecolor=red, linewidth=0.5pt]{->}(3,-.7)(5.1,-.7)
\pcline[linecolor=red, linewidth=0.5pt](3,-1.3)(5.1,-1.3)
\psbezier[linecolor=red, linewidth=0.5pt](5.1,-.7)(6,.7)(7.7,.8)(7.7,-1)
\psbezier[linecolor=red, linewidth=0.5pt](5.1,-1.3)(6,-2.7)(7.7,-2.8)(7.7,-1)
\end{pspicture}
}
\ee

\begin{lm}\label{lem-spiral}
Transformation (\ref{loopinvert}) preserves the Poisson brackets and the set of geodesic functions.
\end{lm}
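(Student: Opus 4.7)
The plan is to verify both invariance statements by a direct local computation in the configuration shown in (\ref{loopinvert}), since the transformation is supported on the two edges $Y$ and $P$ and the trivalent vertex $v$ where they meet. The crucial preliminary observation is that $P$ is a loop at $v$ so that the perimeter of the hole (which by Proposition~\ref{prop12} is the central element $2Y+P$ coming from traversing the boundary of the corresponding face) is preserved by the substitution: $2(Y+P)+(-P)=2Y+P$. This is consistent with the expectation that changing the direction of spiraling must preserve the hole's length.

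For the Poisson brackets I would use (\ref{WP-PB}) directly. The only three-valent vertex whose contribution changes is $v$, where the cyclic ordering is $(Y,P,P)$ because $P$ is a loop. A short bookkeeping gives $\{Y,P\}=\pm 2$ at $v$ (the sign depending on the orientation convention), while the brackets of $Y$ with the edges incident to its far endpoint are unaffected by the transformation and the brackets of $P$ with all other edges vanish. Checking
$\{Y+P,-P\}=-\{Y,P\}-\{P,P\}=-\{Y,P\}$ and $\{X,Y+P\}=\{X,Y\}$ for any edge $X$ adjacent to the far endpoint of $Y$, one sees that the new coordinates satisfy exactly the same brackets as the old ones, where the sign flip matches the reversal of the cyclic ordering at $v$ that accompanies the change of spiraling direction.

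For the geodesic functions, every closed path in the spine passes through this subgraph finitely many times, and each passage has one of three local shapes: the path enters along $Y$, winds $k\ge 0$ times around $P$, and exits along $Y$ (with either $L$- or $R$-turns at $v$); or the path is confined to the loop (contributing a power of the perimeter element). I would write out the corresponding $2\times 2$ matrix product in each case using (\ref{XZ}) and (\ref{R}), and then verify the key elementary identity
\[
X_Y\, T_1\, X_P\, T_2\, X_P\, T_3\, \cdots\, X_P\, T_{k+1}\, X_Y \;=\; X_{Y+P}\, T'_1\, X_{-P}\, T'_2\, X_{-P}\, T'_3\, \cdots\, X_{-P}\, T'_{k+1}\, X_{Y+P},
\]
where the $T_i, T'_i \in\{R,L\}$ are determined by the spiraling direction and the identity is checked by a direct computation exploiting $X_SX_S=-{\mathbb E}$ and $L^2=-R$. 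The confined case reduces to the invariance of the perimeter element already noted.

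The main obstacle will be purely combinatorial: keeping careful track of the cyclic ordering at $v$, the orientations of the spiraling, and the precise string of $R$'s and $L$'s one picks up for each winding, so that the matrix identity above is applied with the correct turn matrices on both sides. Once this bookkeeping is set up, the underlying algebraic identities are immediate from the $SL_2$ relations already used throughout Section~\ref{ss:geodesic}, and no new ideas are required.
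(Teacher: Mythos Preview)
Your Poisson-bracket argument contains a genuine error. At the vertex $v$ the cyclic ordering of half-edges is $(Y,P,P)$, and applying~(\ref{WP-PB}) gives a contribution $+1$ from the pair $(Y,P)$ and $-1$ from the pair $(P,Y)$, so $\{Y,P\}=0$, not $\pm 2$. (Equivalently: $P$ is by itself the boundary sum of the inner face, hence central by Proposition~\ref{prop12}.) There is also no ``reversal of the cyclic ordering at $v$'' accompanying the transformation --- the combinatorial fat-graph structure in~(\ref{loopinvert}) is identical on both sides, only the coordinates change. Your argument therefore needs the sign-flip you produce, $\{Y+P,-P\}=-\{Y,P\}$, to equal the bracket dictated by the \emph{same} graph, namely $\{Y,P\}$; this forces $\{Y,P\}=0$, which contradicts what you computed. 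The paper's proof simply observes that $P$ is central, so $Y\mapsto Y+P$, $P\mapsto -P$ manifestly preserves all brackets.

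For the geodesic functions your plan is more elaborate than what the paper does: the paper records only the two single-pass identities
\[
X_YLX_PLX_Y=X_{Y+P}LX_{-P}LX_{Y+P},\qquad
X_YRX_PRX_Y=X_{Y+P}RX_{-P}RX_{Y+P},
\]
checked directly. Your general multi-winding identity is in the right spirit, but be careful: the interior turns in a $k$-fold winding are all the \emph{same} turn (determined by $e_2\to e_1$ in the cyclic order), not an arbitrary string of $R$'s and $L$'s, and the transformed side has the \emph{same} sequence of turn matrices, not a different one. So the identity you should actually be checking is $X_Y\,T\,X_P\,(T' X_P)^{k-1}\,T\,X_Y = X_{Y+P}\,T\,X_{-P}\,(T' X_{-P})^{k-1}\,T\,X_{Y+P}$ with fixed $T,T'\in\{L,R\}$ dictated by the graph.
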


{\bf Proof}. The preservation of the Poisson bracket is obvious because the variable $P$ Poisson commutes with all
other variables, whereas the preservation of geodesic functions follows from two matrix equalities:
\bea
&&X_YLX_PLX_Y=X_{Y+P}LX_{-P}LX_{Y+P},
\nonumber
\\
&&X_YRX_PRX_Y=X_{Y+P}RX_{-P}RX_{Y+P}.
\nonumber
\eea

We can therefore enlarge the mapping class group of ${\mathcal T}^H_{g,s,r}$ by adding
symmetries between sheets of the $2^s$-ramified covering of the ``genuine'' (nondecorated)
Teichm\"uller space ${\mathcal T}_{g,s,r}$.

The geometrical meaning of this transformation is clear: we change the direction of spiraling to the
hole perimeter line for all lines of the ideal triangle decomposition that spiral to a given hole
like in Fig.~\ref{fi:saucer-pan}.

We can summarize as follows.

\begin{theorem}
The whole mapping class group of $\Sigma_{g,s,r}$ is generated by morphisms described by
Lemmas~\ref{lem-abc},~\ref{lem-pending1}, and~\ref{lem-spiral}.
\end{theorem}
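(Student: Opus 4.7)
The plan is to prove the theorem in two stages: first establish that the complex of spines (of fixed combinatorial type $(g,s,r)$ with a fixed multiset $\{p_i\}_{i=1}^r$ of orbifold orders), with $1$-cells given by the three listed moves, is connected; then deduce that every mapping class group element is a composition of these moves by a standard argument on the action of the MCG on the spine complex.

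For the first stage, I would proceed by reduction to the classical case. Fix a spine $\Gamma$ and a target spine $\Gamma'$ in $\Sigma_{g,s,r}$. First, use Lemma~\ref{lem-spiral} to match the spiraling direction on every hole in $\Gamma$ with that of $\Gamma'$. Second, use Lemma~\ref{lem-pending1} to rearrange the orbifold points: the flip morphism (\ref{morphism-pending}) has the combinatorial effect of moving a pending edge past an adjacent ideal triangle edge, so iterated application permutes the pending edges among the faces of the graph, realizing all possible choices of the partitions $\{\delta_k\}$ and of the cyclic orderings within each $\delta_k$ (this is the orbifold analogue of Penner's transitivity of flips on pending vertices, shown in~\cite{Ch1a,ChM} for $p=2,3$; the mechanism here is identical since Fig.~\ref{fi:interchange-p-dual} describes the same move at the level of the dual graph regardless of $p$). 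At this point $\Gamma$ and $\Gamma'$ can be arranged to have the same pending edge configuration, so only their underlying trivalent (non-pending) sub-spines differ; by Penner's theorem on the connectivity of the flip graph of an ideal triangulation (applied to the reduced trivalent graph obtained by treating pending edges as marked half-edges), these are connected by Whitehead moves on inner edges, i.e., by applications of Lemma~\ref{lem-abc}.

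For the second stage, let $\phi$ be an element of the mapping class group of $\Sigma_{g,s,r}$ (where diffeomorphisms are required to permute orbifold points preserving their orders and permute boundary components). Then $\phi(\Gamma)$ is another spine of the same combinatorial type as $\Gamma$, so by the first stage there is a sequence of moves $W$ carrying $\Gamma$ to $\phi(\Gamma)$. By Lemmas~\ref{lem-abc}, \ref{lem-pending1}, and~\ref{lem-spiral}, $W$ acts on the Teichm\"uller space $\mathcal{T}^H_{g,s,r}$ preserving the Poisson structure and all geodesic functions, so $W$ and $\phi$ induce the same transformation on $\mathcal{T}^H_{g,s,r}$; since the MCG acts faithfully on the decorated Teichm\"uller space, $\phi$ equals the composition represented by $W$.

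The main obstacle is the rigorous verification of connectivity of the spine complex when orbifold points of arbitrary order are present, specifically that pending-edge flips together with inner flips suffice to realize every reassignment of orbifold points to holes and every change of cyclic order. The delicate point is that a single pending-edge flip only transfers a pending vertex across one adjacent triangle; to move it into a distant face or to reorder it against many other pendants, one needs a potentially long chain of inner flips preparing the adjacent triangulation, then a pending-edge flip, then inner flips to restore. The cleanest way to handle this is to choose a convenient "canonical" spine (e.g., one with all pending edges concentrated at a single hole in a fixed cyclic order) and show that both $\Gamma$ and $\Gamma'$ reduce to it; this reduction is combinatorial and follows the strategy of~\cite{Ch1a}, the only new input being that formula (\ref{morphism-pending}) is valid irrespectively of $p$, so the graph-theoretic step is uniform in the orbifold orders.
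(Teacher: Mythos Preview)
Your proposal is correct and follows the same strategy the paper sketches in the discussion immediately preceding the theorem: first use the pending-edge and inner-edge flips (together with the spiraling change) to connect any two spines of the same combinatorial type, then invoke the standard passage from the flip groupoid to the mapping class group (the paper's ``It is a standard tool\dots'' paragraph). The paper itself gives no argument beyond those two paragraphs, so your write-up is already more explicit than the original; the only phrasing to tighten is in Stage~2, where the point is not that $W$ and $\phi$ preserve the same geodesic functions, but simply that the loop $\Gamma \xrightarrow{W} \phi(\Gamma)$ followed by the relabeling identification is, by definition, the mapping class~$\phi$ realized as a word in the generating moves.
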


\section{Conjectures}

Lemma~\ref{cc-prime} shows that generalized transformation $a^2+2\cos(\pi/p)ab+b^2$ appears as a flip in the presence of an orbifold point of order $p$. The generalized cluster algebra constructed in this way
is a subalgebra of a bigger standard cluster algebra (maybe of infinite rank)
associated with triangulated surface while generalized exchange relation
are sequences of standard mutations.
Note that the positivity of Laurent polynomials for cluster algebras associated with bordered surfaces is known by ~\cite{MSchW}.
This implies the positivity of Laurent polynomials in generalized cluster algebra associated with triangulations of the surface with arbitrary orbifold points.

We formulate the following conjecture.

\begin{conjecture}\label{generalizedPositivity} If $\rho$ is a reciprocal polynomial with positive coefficients then any cluster variable  of a generalized cluster algebra is expressed as a positive Laurent polynomial in the initial cluster.
\end{conjecture}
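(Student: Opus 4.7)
The plan is to follow the paradigm already established for the second-order case in Lemma~\ref{cc-prime} and its remark: the generalized mutation with reciprocal polynomial $t^2 + 2\cos(\pi/p)t + 1$ was shown to be realizable as a finite sequence of standard two-term cluster mutations inside a larger (geometric) cluster algebra, and positivity then descends from the positivity theorem of Musiker--Schiffler--Williams for surface cluster algebras. The natural strategy is therefore to \emph{unfold} a general reciprocal polynomial with positive coefficients as a composition of such orbifold-type pieces, so that the generalized mutation is again a word in standard mutations of an ambient (possibly infinite rank) cluster algebra with coefficients in the positive semifield.

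First, I would exploit the fact that a reciprocal polynomial $\rho$ of degree $2m$ can be written as $t^m \sigma(t+t^{-1})$ for a polynomial $\sigma$ of degree $m$, while an odd-degree reciprocal factors as $(t+1)$ times an even-degree reciprocal. When all roots of $\rho$ lie on the unit circle at angles of the form $\pi k/p$, positivity of the coefficients of $\rho$ is automatic and each quadratic factor $t^2 + 2\cos(\pi k/p)t + 1$ is the orbifold exchange of Lemma~\ref{cc-prime}; then the generalized mutation is a composition of the orbifold mutations handled there, and positivity follows edge by edge because each intermediate expression is a positive Laurent polynomial in the previous (positive Laurent) cluster. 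The first serious step is therefore to make precise that these compositions do indeed realize the $d$-th order generalized mutation, verifying this by induction on $d$ using the matrix identities of Lemma~\ref{lem-length}.

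Second, for a reciprocal polynomial whose roots do \emph{not} sit at the special angles $\pi k/p$, I would attempt an approximation/specialization argument: embed the generalized cluster algebra into a universal one whose exchange polynomials have formal positive coefficients (treating the $p_{i;\ell}$ as indeterminates in a positive semifield $\PP$), prove Laurent positivity there using the caterpillar-lemma framework of Theorem~\ref{thm:generalizedLaurent} but now tracking the sign of the correction monomial $\hat L = R|_{x_j\leftarrow Q_0/x_j}/P$, and then specialize the coefficients. The universal-coefficient step should work because the caterpillar computation shows $\hat L$ to be a monomial in the $p_{k;\ell}$ and in the cluster variables, and one can check directly from the derivation in the proof of Theorem~\ref{thm:generalizedLaurent} that this monomial carries a positive coefficient provided all $p_{k;\ell}$ are positive.

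The main obstacle I expect is controlling the induction when the same cluster variable appears on both sides of the exchange (i.e.\ when the quotient $P/x_k$ arising in the mutation rule cancels nontrivially with poles of the preceding cluster variables). In the standard two-term case this is handled by the greedy basis or by scattering diagrams (Gross--Hacking--Keel--Kontsevich), and the real work will be to adapt one of these machineries to reciprocal exchanges of arbitrary degree: either defining wall-crossing functions whose logarithms are the new exchange polynomials and proving consistency at higher-order wall collisions, or constructing a generalized greedy basis and showing that its structure constants are nonnegative. If that adaptation succeeds, the conjecture follows uniformly; if not, one still obtains the geometric cases arising from orbifold surfaces by the unfolding step above.
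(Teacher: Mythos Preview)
This statement is a \emph{conjecture} in the paper, not a theorem; the paper offers no proof. Its only evidence is precisely your opening observation---that for the quadratic orbifold exchange $a^2+2\cos(\pi/p)\,ab+b^2$ the generalized algebra sits inside a standard surface cluster algebra, so positivity descends from Musiker--Schiffler--Williams---together with a direct check in the rank-$2$ finite-type cases. Your first paragraph therefore matches what the paper actually establishes, and the paper explicitly stops there, leaving the general statement open.

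Your attempt to go further has genuine gaps. The factorisation step covers only reciprocal polynomials all of whose roots lie on the unit circle; a degree-four example such as $(t^2+10t+1)(t^2-\tfrac{1}{10}t+1)=t^4+\tfrac{99}{10}t^3+t^2+\tfrac{99}{10}t+1$ has positive coefficients, yet one real quadratic factor has middle coefficient $10>2$ (roots off the unit circle) and the other has a negative middle coefficient, so neither is of the form handled by Lemma~\ref{cc-prime}. Even in the favourable case you give no construction by which a single degree-$d$ generalized mutation equals a word in degree-$2$ orbifold mutations inside some ambient algebra; Lemma~\ref{cc-prime} produces one quadratic exchange from standard flips in a single $p$-gon, and nothing in Lemma~\ref{lem-length} promotes this to higher $d$. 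Your caterpillar step shows only that the correction factor $\hat L$ in $R|_{x_j\leftarrow Q_0/x_j}=\hat L\cdot P$ is a positive monomial---exactly the input to Laurentness, carrying no information about signs of individual Laurent coefficients of a remote cluster variable. The scattering-diagram/greedy-basis suggestion is a reasonable research direction, but, as you yourself concede, it is not a proof. In short, you have reproduced the paper's evidence and sketched plausible attacks, but the conjecture remains open exactly as the paper leaves it.
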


We checked by direct inspection that the statement holds for finite type rank 2 cluster algebras.

The example above leads to a natural question:

{\bf Question: } \emph{Is any generalized cluster algebra  a subalgebra of some standard cluster algebra?}

\end{document}